\documentclass[12pt,reqno]{article}

\title{Complexity of Linear Subsequences of Fibonacci-Automatic Sequences}

\author{Delaram Moradi\footnote{Research supported by NSERC grant RGPIN-2024-03725.}\\
School of Computer Science\\
University of Waterloo\\
Waterloo, ON  N2L 3G1 \\
Canada\\
\href{mailto:delaram.moradi@uwaterloo.ca}{\tt delaram.moradi@uwaterloo.ca}\\
\and
Narad Rampersad\footnote{Research supported by NSERC grants RGPIN-2019-04111 and RGPIN-2025-04076.}\\
Department of Math/Stats\\
University of Winnipeg\\
515 Portage Ave.\\
Winnipeg, MB R3B 2E9\\
Canada\\
\href{mailto:n.rampersad@uwinnipeg.ca}{\tt n.rampersad@uwinnipeg.ca}\\
\and 
Jeffrey Shallit$^*$\\
School of Computer Science\\
University of Waterloo\\
Waterloo, ON  N2L 3G1 \\
Canada\\
\href{mailto:shallit@uwaterloo.ca}{\tt shallit@uwaterloo.ca}}

\usepackage[usenames]{color}
\usepackage{amsmath}
\usepackage{amssymb}
\usepackage{amsthm}
\usepackage{amsfonts}
\usepackage{amscd}
\usepackage{graphicx}

\usepackage[colorlinks=true,
linkcolor=webgreen,
filecolor=webbrown,
citecolor=webgreen]{hyperref}

\definecolor{webgreen}{rgb}{0,.5,0}
\definecolor{webbrown}{rgb}{.6,0,0}

\usepackage{color}
\usepackage{fullpage}
\usepackage{float}
\usepackage{enumitem}

\usepackage{graphics}
\usepackage{latexsym}
\usepackage{epsf}

\newcommand{\seqnum}[1]{\href{https://oeis.org/#1}{\rm \underline{#1}}}

\def\AND{\, \wedge \, }

\def\suchthat{\ : \ }

\DeclareMathOperator{\pre}{Pre}

\DeclareMathOperator{\add}{add}

\def\Enn{\mathbb{N}}

\makeatletter
\def\Ddots{\mathinner{\mkern1mu\raise\p@
\vbox{\kern7\p@\hbox{.}}\mkern2mu
\raise4\p@\hbox{.}\mkern2mu\raise7\p@\hbox{.}\mkern1mu}}
\makeatother

\begin{document}

\maketitle

\theoremstyle{plain}
\newtheorem{theorem}{Theorem}
\newtheorem{corollary}[theorem]{Corollary}
\newtheorem{lemma}[theorem]{Lemma}
\newtheorem{proposition}[theorem]{Proposition}
\newtheorem{problem}{Problem}

\theoremstyle{definition}
\newtheorem{definition}[theorem]{Definition}
\newtheorem{example}[theorem]{Example}
\newtheorem{conjecture}[theorem]{Conjecture}

\theoremstyle{remark}
\newtheorem{remark}[theorem]{Remark}

\begin{abstract}
We construct automata with input(s) in Fibonacci representation (also known as Zeckendorf representation) recognizing some basic arithmetic relations and study their number of states. 
We also consider some basic operations on Fibonacci-automatic sequences and discuss their state complexity. Furthermore, as a consequence of our results, we improve a bound in a recent paper of Bosma and Don. We also discuss the state complexity and runtime complexity of using a reasonable interpretation of B\"uchi arithmetic to actually construct some of the studied automata recognizing relations.
\end{abstract}

\section{Introduction}

In this paper, we discuss 
the state complexity of specific arithmetic relations (such as addition and multiplication) when the inputs are given in Fibonacci representation (also known as Zeckendorf representation).  We also study the state complexity of shifts and linear subsequences of Fibonacci-automatic sequences, and the computational complexity of forming the automata using an interpretation of B\"uchi arithmetic.  In a previous paper \cite{moradi_complexity_2026} we studied the same questions for $k$-automatic sequences, but the case of
Fibonacci representation presents many new and different challenges.

In Section \ref{section:background}, we provide the necessary background for the rest of the paper.
In Section \ref{section:relations}, we construct automata of $O(\log c)$ states recognizing $Y=X+c$ and $Y=X-c$ where $X, Y$ are inputs, and $c \geq 0$ is a constant. We construct an automaton of $O(n^2)$ states recognizing the relation $Y = nX+c$ where
$n \geq 1$ is a natural number, $0 \leq c < n$, and the numbers $X$ and $Y$ are inputs. In Section \ref{section:lin-subseq}, we show that if $(h(i))_{i \geq 0}$ is a fixed Fibonacci-automatic sequence generated by an $m$-state automaton, then the shifted subsequence $(h(i+c))_{i \geq 0}$ can be generated by a DFAO of $O(m^2c^2)$ states, the linear subsequence $(t(i+c))_{i \geq 0}$ of the Thue-Morse sequence 
can be generated by a DFAO of $O(c)$ states, and the linear subsequence
$(h(ni+c))_{i \geq 0}$ for $n \geq 1$ and $0 \leq c < n$ can be generated by a DFAO of 
$O(m^2n^4)$ states.
As a consequence, we improve a bound in a recent paper by Bosma and Don \cite{bosma_constructing_2024} concerning
the size of morphisms for linear subsequences of the Fibonacci word.

An example of a system using an interpretation of B\"uchi arithmetic is {\tt Walnut}, a free software package that can carry out computations with automatic sequences and evaluate statements about them phrased in first-order logic. For more information about {\tt Walnut}, see, for example,
\cite{mousavi_automatic_2021, shallit_logical_2023}.  In Section \ref{section:Buchi}, we show how the automaton generating $(h(ni+c))_{i \geq 0}$ for fixed $n \geq 1, 0\leq c < n$ is created in polynomial time using an interpretation of B\"uchi arithmetic.

In this paper, we occasionally refer to $O(\log i)$ where $i$ can be $0$ or $1$. In these cases, we adopt the usual convention that $O(\log i) = O(1)$ for $i \in \{0,1 \}$.

\section{Background} \label{section:background}

\subsection{Finite Automata}
\label{subsection:fa}

In this paper we use the familiar model of deterministic finite automaton, as discussed in \cite{hopcroft_introduction_1979}, and some variations on it.
A deterministic finite automaton (DFA)
consists of a finite set of states $Q$, an input alphabet
$\Sigma$,
an initial state $q_0$,
a set of accepting states $A$,
and a transition function $\delta:Q \times \Sigma \rightarrow Q$ that specifies the next state of the
automaton, based on the current state and the current input letter.  We extend the domain of $\delta$ to $Q \times \Sigma^*$ in the usual manner.
Acceptance is defined by whether completely processing an input causes the
DFA to enter an accepting state.
Usually the transition function $\delta$ is taken to be a total
function from $Q \times \Sigma$ to $Q$, but in this paper, we allow
it to be a partial function in order to gracefully handle dead states. 
A state $q$ of a DFA is called {\it dead} if it is not
possible to reach any accepting state from $q$ by a 
(possibly empty) path.  A minimal DFA for a language $L$
clearly has at most one dead state.  By convention, we
do not count or display dead states in this paper. A state is called {\it accessible} if it is reachable from the start state, and
{\it co-accessible} if one can reach an accepting state from it.

Another model we need is the nondeterministic finite automaton (NFA), which
is similar to the DFA, except that now the transition function
is allowed to transition to $0$ or multiple states on the same input
letter, and defines acceptance based on the existence of at least
one path that enters an accepting state after completely processing the input.

We also use the notion of deterministic finite automaton
with output (DFAO), which is like a DFA, except that outputs are associated
with states.  The output corresponding to an input is the output associated
with the last state reached.   A DFAO is called a {\it Fibonacci-DFAO\/} if its input(s) are in Fibonacci representation (see Section \ref{subsection:fib-rep}).  A Fibonacci-DFAO $M = (Q, \Sigma, \Delta, \delta, q_0, \tau)$ generates a {\it Fibonacci-automatic sequence} $(h(i))_{i \geq 0}$ in the following way: the input is an
msd-first representation $x$ of $i$ in the Fibonacci numeration system, and the output
$h(i)$ is $\tau(\delta(q_0, x))$.  We assume that such a DFAO does not reach any state on an input containing an invalid Fibonacci representation.

For a given
automatic sequence $(h(i))_{i \geq 0}$ there is a 
unique associated {\it interior sequence} $(h'(i))_{i \geq 0}$ taking
its values in $Q$, defined
by taking the minimal DFAO $(Q, \Sigma, \Delta, \delta, q_0, \tau)$ and replacing $\tau$ by the identity map on $Q$.  Here ``unique'' means up to renaming of the letters.
It follows that $(h(i))_{i \geq 0}$ is the image of
$(h'(i))_{i \geq 0}$
under the coding $q \rightarrow \tau(q)$.

The state complexity of a formal language is the number of states in the minimal DFA recognizing it and we say the state complexity of an automatic sequence is the number of states in the minimal DFAO generating it.

We will need a lemma about the subword complexity (also called ``factor complexity'') of an infinite word $\bf x$ generated by a Fibonacci-automatic sequence.
The subword complexity of $\bf x$ is a function
$\rho_{\bf x} (n)$ counting the number of distinct subwords of length $n$ appearing in $\bf x$.
\begin{lemma}
We have $\rho_{\bf x}(n) = O(n)$.
\label{lemma:subword-sc}
Let ${\bf x}$ be  Fibonacci-automatic sequence generated by an $m$-state DFAO with msd-first input.
Then $\rho_{\bf x} (n) = O(n m^2)$ for all $n \geq 1$. 
\end{lemma}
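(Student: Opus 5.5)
We would prove the bound by showing that every length-$n$ subword of ${\bf x}$ is determined by a small amount of data: a Fibonacci number $F_k$ comparable to $n$, a position near a multiple-like boundary, and a pair of states of the DFAO. Let $M$ be the $m$-state Fibonacci-DFAO generating ${\bf x} = (h(i))_{i\ge 0}$, with transition function $\delta$ and start state $q_0$.

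\emph{The block structure.} Fix $k$ with $F_k \le n < F_{k+1}$. Every $i$ whose Fibonacci representation has length greater than $k$ can be written uniquely as $i = j + r$. Here $j$ has a representation $u0^{k}$, so $j$ is a sum of Fibonacci numbers of index at least $k+2$. The remainder $r$ satisfies $0 \le r < F_{k+2}$ and has a representation $w$ of length $k+1$ (padded with leading zeros), so the representation of $i$ is $uw$. The value $h(i) = \tau(\delta(\delta(q_0,u),w))$ therefore depends only on the state $p = \delta(q_0,u)$ and on $r$.

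Consider the consecutive integers $j$ whose representations end in $0^k$, that is, the points of the lattice $\{\sum_{t\ge k+2} \epsilon_t F_t\}$. Their gaps are either $F_{k+2}$ or $F_{k+1}$. The gap $F_{k+1}$ occurs when $u$ ends in $1$, since then the next block must have its leading digit $0$ and $r < F_{k+1}$. Hence ${\bf x}$, beyond a prefix of length $O(F_{k+2})$, is a concatenation of blocks
$$B_p = h_p(0)h_p(1)\cdots h_p(L_p-1), \qquad h_p(r) = \tau(\delta(p,\mathrm{rep}_{k+1}(r))),$$
where the length $L_p \in \{F_{k+1}, F_{k+2}\}$ is determined by the last digit of $u$. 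That last digit can be recorded in an enlarged state, at a cost of a factor $2$. Each block has length at least $F_{k+1} > n$.

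\emph{Counting.} Any length-$n$ subword then either lies inside the initial prefix of length $O(F_{k+2}) = O(n)$, which contributes $O(n)$ subwords, or lies inside the concatenation of two consecutive blocks $B_pB_{p'}$. Since $n$ is smaller than every block length, it cannot span three blocks. Such a subword is determined by the pair $(p,p')$, of which there are $O(m^2)$ choices, together with its starting offset in $B_p$. The offset can be restricted to the last $n$ positions of $B_p$, or else the subword is a factor of a single block and is counted as occurring in some pair anyway. Since $B_p$ has length $O(n)$, there are $O(n)$ offsets in total. This gives $\rho_{\bf x}(n) = O(n) + O(m^2 \cdot n) = O(nm^2)$.

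\emph{Main obstacle.} The delicate step is justifying the decomposition. We need the decomposition $i = j + r$ to correspond exactly to the concatenation $uw$ of representations. In particular, it must respect the Zeckendorf condition: no $11$ may occur across the junction between $u$ and $w$. This is what forces the block lengths to be $F_{k+1}$ or $F_{k+2}$ depending on the final digit of $u$.

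We must also ensure that the state reached after $u$ really determines the block contents, including the leading-zero padding of $w$. If the DFAO is not assumed to read leading zeros harmlessly, the first $F_{k+2}$ terms must be treated separately, which is exactly what the prefix term handles. The next block after $B_p$ is governed by a state $p'$ that is not simply determined by $p$. We therefore must count over all pairs rather than over transitions, and this is the source of the $m^2$ factor.
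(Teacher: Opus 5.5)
Your argument is correct and is essentially the proof the paper defers to (Theorem 10.3.1 of Allouche and Shallit, adapted to Fibonacci representation): your blocks $B_p$ are just $\tau(\gamma^k(p))$ for the morphism $\gamma$ read off from the DFAO, and counting length-$n$ factors of $B_pB_{p'}$ over all pairs is the same as counting factors of $\gamma^k(bc)$ over two-letter factors $bc$. The only slip is an off-by-one: for $j=[u0^k]$ and $0\le r<F_{k+2}$ to be consistent, $w$ must have length $k$ (so $\mathrm{rep}_{k+1}$ should be $\mathrm{rep}_k$), although using $k+1$ consistently throughout works just as well.
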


\begin{proof}
See \cite[Theorem 10.3.1]{allouche_automatic_2003}. The proof is stated there for base-$k$, but the same proof works for Fibonacci-automatic sequences.
\end{proof}

Finally, we also need the unfamiliar notion of unambiguous finite automaton with output (UFAO),
which is a blend of the notion of unambiguous
finite automaton (UFA) with that of the DFAO.  The UFAO is a nondeterministic machine where there is at most one output for every input.  The output associated with an input is the output associated with the unique accepting state reached.  The usual subset construction can be used to turn a UFAO into a DFAO, but at the risk of exponential blow-up in the number of states.

The DFA and NFA models  are described in detail in standard reference
works, such as Hopcroft and Ullman \cite[Chapter 2 and Chapter 3]{hopcroft_introduction_1979}.  For the DFAO, see, for example, \cite[Chapter 4]{allouche_automatic_2003}.
UFA's are discussed in, among many other places,
Jir{\'a}sek et al.~\cite{jozef_jirasek_jr_operations_2018}.

In this paper, we often think of automata as processing natural numbers $i$
as input, represented in some numeration system.
The input alphabet is $\Sigma_k = \{ 0,1,\ldots, k-1 \}$
for some integer $k \geq 2$, and it represents the digits
of the representation of $i
$.

This brings up a subtle point:  how do we handle leading
zeros in the input?  What if, for example, the
automaton gives a different result on input $00101$ as it does on input
$101$?  To avoid this issue, the designed automata satisfy the following criteria:
first, that 
the automata we discuss give the same behavior (accept/reject or output)
no matter how many leading zeros are appended to the front of the input.
Second, we assume that automata have a self-loop on the initial
state $q_0$ on input $0$ (or, if there are multiple inputs, on input
$[0,0,\ldots,0]$).  These conventions are crucial when we are working
with multiple integer inputs of different lengths.

\subsection{Fibonacci Representation} \label{subsection:fib-rep}

We let $\Enn = \{0,1,2,\ldots \}$.
Define the Fibonacci numbers, as usual, by
$F_0 = 0$, $F_1 = 1$, and $F_n = F_{n-1} + F_{n-2}$ for $n \geq 0$.

Fibonacci representation was introduced by
Lekkerkerker \cite{lekkerkerker_voorstelling_1951} and
Zeckendorf \cite{zeckendorf_representation_1972}, although it
can be found 
in an earlier, much more general form in a paper of Ostrowski \cite{ostrowski_bemerkungen_1922}.
The Fibonacci representation of $n \in \Enn$ writes $n$ 
as a sum of distinct Fibonacci numbers $F_i$ for $i \geq 2$; that is,
$n = \sum_{2 \leq i \leq t} e_i F_{i}$ for $e_i \in \{0,1\}$.
To ensure uniqueness of the representation, no two
consecutive Fibonacci numbers can be used; that is, in a valid representation we have
$e_i  \cdot e_{i+1} = 0$ for $1 \leq i \leq t$.  A representation can
be associated with the binary word of its coefficients
$e_t  \cdots e_2$;
note that representations are written with the most significant bit
at the left.  We need a way to convert an arbitrary
binary word $x = e_t \cdots e_2$ to the integer it
represents. For this we write $[x] = \sum_{1 \leq i \leq t} e_i F_{i}$.
For example, we have $[01001] = 6$.
The canonical Fibonacci representation of an integer $i$ is the word $x$, without leading zeros, such that $[x] = i$.  For this we use
the notation $(i)$.
We say a word $z$ is a {\it valid right extension\/} of a word $x$
if $x$ is a prefix of $z$ and $z$ contains no occurrence of $11$.

Sometimes we will need representations of pairs of natural
numbers.  To do so, we use the alphabet $\Sigma_2 \times \Sigma_2$.
This may require padding the shorter representation with leading zeros.
Thus, for example, the word
$[0,1][0,0][1,1][0,0][1,0]$ represents the pair $[4,11]$.
If $w$ and $x$ are words of the same length over the
alphabet $\Sigma$, then by $w \times x$ we mean
the element $y \in (\Sigma^2)^*$ such that
the first component is $w$ and the second is $x$. These notions are similarly extended to represent tuples of more than two natural numbers.

For the rest of this paper, we assume that the inputs to automata are in Fibonacci representation and given most significant digit first format, unless otherwise stated.

The Fibonacci word ${\bf f} = f_0 f_1 f_2 \cdots = 01001010\cdots$ is an infinite binary word that can be defined in a number of equivalent ways \cite{berstel_fibonacci_1986}. The simplest for us is that $f_i$ is the last bit of the
Fibonacci representation of $i$. It is also defined as the fixed point of the morphism $0 \rightarrow 01$, $1 \rightarrow 0$. It can be generated by a DFAO of $2$ states, as illustrated in
Figure~\ref{fig:fib-word-automaton}.

\begin{figure}
    \centering
    \includegraphics[width=0.5\linewidth]{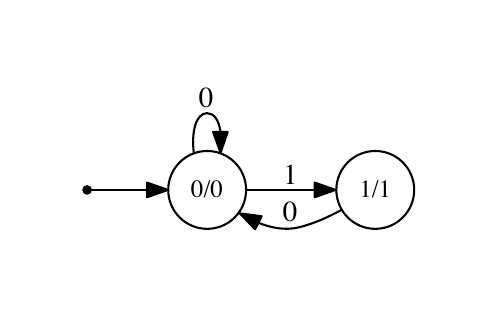}
    \caption{Automaton for the Fibonacci word $\bf f$.}
    \label{fig:fib-word-automaton}
\end{figure}

We now state some facts about Fibonacci representation.  The first is easy and the second is mentioned by Reble \cite{reble_zeckendorf_2008}, and the third is a consequence of the Fibonacci representation being given by the greedy algorithm.   Let $\alpha = (1+\sqrt{5})/2$ and $\beta = (1-\sqrt{5})/2$, the zeros of the polynomial $X^2-X+1$.
\begin{lemma}
\leavevmode
\begin{itemize}
    \item[(a)] Let $x \in \Sigma_2^*$.  Then
    $[x00] = [x]+[x0]$.
    \item[(b)] Let $x$ be a valid Fibonacci representation.
    Then $-\beta^2 < [x0]-\alpha[x] < -\beta$.
    \item[(c)] Let $x,y$ be valid Fibonacci representations.
    Then $x < y$ in lexicographic order iff $[x] <[y]$.
\end{itemize}
\label{lemma:[x0]}
\end{lemma}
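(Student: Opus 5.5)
For (a), I will argue directly from the recurrence for the Fibonacci numbers. Write $x = e_t \cdots e_1$, so that $[x] = \sum_{i} e_i F_i$. Appending a $0$ shifts every index up by one, so $[x0] = \sum_i e_i F_{i+1}$. Likewise $[x00] = \sum_i e_i F_{i+2}$. Since $F_{i+2} = F_{i+1} + F_i$ for every $i \geq 1$, summing over $i$ gives $[x00] = [x0] + [x]$. This holds for arbitrary binary words $x$, valid or not, and no further argument is needed.

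For (b), I will use Binet's formula $F_n = (\alpha^n - \beta^n)/\sqrt5$, which gives $F_{n+1} - \alpha F_n = \beta^n$. Hence
\[ [x0] - \alpha[x] = \sum_i e_i \beta^i, \]
where the sum runs over positions $i \geq 2$ with $e_i = 1$ in a valid representation (no two adjacent $1$s). Since $\beta \approx -0.618$, we have $\beta^i > 0$ for even $i$ and $\beta^i < 0$ for odd $i$.
\begin{itemize}
\item The most negative value is bounded by using all odd indices $\geq 3$, which gives $\sum_{k\ge1}\beta^{2k+1} = \beta^3/(1-\beta^2)$. Using $1-\beta^2 = -\beta$, this equals $-\beta^2$.
\item The most positive value is bounded by using all even indices $\geq 2$, which gives $\beta^2/(1-\beta^2) = -\beta$.
\end{itemize}
Both extremes use pairwise non-adjacent indices, so they are limits of valid representations, but a finite word attains neither. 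This yields the strict inequalities $-\beta^2 < [x0]-\alpha[x] < -\beta$.

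The only delicate point is confirming strictness when mixed parities occur. Adding an even-index term to an odd-only sum moves it upward, and the odd-only partial sums already lie strictly above $-\beta^2$. The symmetric remark handles the upper bound, so there is no real obstacle here.

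For (c), I will first pad $x$ and $y$ to the same length with leading zeros; this is presumably the intended reading of lexicographic comparison. Let $j$ be the first position where they differ, with $x$ having $0$ and $y$ having $1$ there. The common prefix contributes equally to $[x]$ and $[y]$. The remainder of $y$ contributes at least $F_j$. The remainder of $x$ is a valid representation with top index below $j$, and the standard greedy lemma says such a value is at most $F_j - 1$. I will prove that lemma by induction using $F_{j-1} + F_{j-3} + \cdots = F_j - 1$. It follows that $[x] < [y]$. The converse direction then follows from trichotomy and uniqueness of representations.
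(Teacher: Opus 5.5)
Your proposal is correct and follows the route the paper indicates: the paper only calls (a) easy, cites Reble for (b), and attributes (c) to the greedy algorithm, and your identity $F_{n+1}-\alpha F_n=\beta^n$ with the odd/even geometric-series bounds is exactly the Reble computation the paper carries out in Case~2 of the proof that $[x]=\lfloor([xa]+2)/\alpha\rfloor-1$, while your bound $F_{j-1}+F_{j-3}+\cdots=F_j-1$ is the standard greedy argument behind (c). The one thing to tidy is notation: the last digit carries weight $F_2$, as you correctly use in (b), so write $x=e_t\cdots e_2$ in (a) rather than $e_t\cdots e_1$, since that indexing is what makes the extremes in (b) equal $-\beta^2$ and $-\beta$.
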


Next we state a lemma that we will use in
Section~\ref{section:lin-subseq}.
\begin{lemma}
Suppose $x$ is a word, $a \in \{0,1\}$ and $xa$ is a valid Fibonacci representation.
Then $[x] = \lfloor ([xa]+2)/\alpha \rfloor - 1$.
\label{lemma:[x]-floor}
\end{lemma}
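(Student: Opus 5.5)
The plan is to set $n=[x]$ and squeeze $([xa]+2)/\alpha$ strictly between $n+1$ and $n+2$. Then its floor is $n+1$, which is the claim. Throughout I will use $\alpha-1=1/\alpha$, $\alpha\beta=-1$, $\alpha+\beta=1$, and the numerical values $-\beta\approx 0.618$, $\beta^2\approx 0.382$, $-\beta^3\approx 0.236$. Write $[x0]=\alpha n+\varepsilon_x$. Since $x$ is a prefix of the valid representation $xa$, it is itself valid (possibly with leading zeros, which change nothing). So Lemma~\ref{lemma:[x0]}(b) gives $-\beta^2<\varepsilon_x<-\beta$. Also $[xa]=[x0]+a$.

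\emph{Lower bound (both cases).} We have $[xa]+2\ge [x0]+2>\alpha n+2-\beta^2$. Since $2-\beta^2=1-\beta=\alpha$, this equals $\alpha(n+1)$. Hence $([xa]+2)/\alpha>n+1$.

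\emph{Upper bound, case $a=0$.} Here $[x0]+2<\alpha n+2-\beta$. Since $2-\beta=\alpha+1$, this equals $\alpha(n+1)+1$. Dividing by $\alpha$ gives less than $n+1+1/\alpha<n+2$.

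\emph{Upper bound, case $a=1$.} This is the main obstacle, because the crude bound from (b) only yields $[x0]+3<\alpha n+3.618$, which overshoots $\alpha(n+2)\approx\alpha n+3.236$. I will use validity of $xa=x1$: either $x$ is empty, or $x$ ends in $0$. If $x$ is empty (or all zeros), then $n=0$ and $\lfloor 3/\alpha\rfloor-1=0$, which is checked directly.

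Otherwise write $x=y0$. Lemma~\ref{lemma:[x0]}(a) gives $[y00]=[y]+[y0]$, and hence
\[
\varepsilon_{y0}=[y00]-\alpha[y0]=[y]-\tfrac{1}{\alpha}[y0]=-\varepsilon_y/\alpha .
\]
Applying (b) to $y$ gives $\varepsilon_x=\varepsilon_{y0}\in(\beta/\alpha,\ \beta^2/\alpha)=(\beta^2\cdot(-1)\cdot\ldots)$; concretely this interval is $(-\beta^2\cdot(-\beta),\,-\beta^3)$, which is approximately $(-0.236,0.236)$. In particular $\varepsilon_x<-\beta^3$.

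Then $[x1]+2=\alpha n+\varepsilon_x+3<\alpha n+3-\beta^3$. A direct check with $\beta^2=\beta+1$ shows $3-\beta^3=2\alpha$, since $\beta^3=2\beta+1$ and $3-2\beta-1=2(1-\beta)=2\alpha$. So $[x1]+2<\alpha(n+2)$, and dividing by $\alpha$ gives less than $n+2$.

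Combining the lower bound with the two upper bounds gives $\lfloor([xa]+2)/\alpha\rfloor=n+1$ in every case, which proves the lemma.
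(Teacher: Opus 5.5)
Your proof is correct, and its overall structure matches the paper's. Both arguments squeeze $([xa]+2)/\alpha$ strictly between $[x]+1$ and $[x]+2$. The case $a=0$ follows directly from Lemma~\ref{lemma:[x0]}(b). The case $a=1$ needs the sharper bound $[x0]-\alpha[x] < -\beta^3=\sqrt{5}-2$ when $x$ ends in $0$.

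The genuine difference is how you get that sharper bound. The paper follows Reble and expands $[x'00]-\alpha[x'0]$ as $\sum e_i\beta^{i+1}$ with all exponents at least $3$. It then bounds this between the geometric series of odd and of even powers of $\beta$. You instead use Lemma~\ref{lemma:[x0]}(a) to get the identity $\varepsilon_{y0}=-\varepsilon_y/\alpha$, and then apply part (b) to the shorter word $y$.

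Your route is shorter and self-contained: it reuses the lemmas already stated rather than redoing the Binet-type expansion. The paper's route makes visible why the bound improves: the missing last digit removes the $\beta^2$ term from the sum.

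One slip to fix. Since $1/\alpha=-\beta$, we have $\beta/\alpha=-\beta^2\approx-0.382$, not $\beta^3$. So the correct interval is $\varepsilon_{y0}\in(-\beta^2,-\beta^3)$, which is exactly the paper's interval $(-\beta^2,\sqrt{5}-2)$, not $(\beta^3,-\beta^3)\approx(-0.236,0.236)$. This does not affect your argument, because you only use the upper endpoint, and your lower bound comes from (b) applied to $x$ directly. Also clean up the garbled expression $(\beta^2\cdot(-1)\cdot\ldots)$ in that line.
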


\begin{proof}
\leavevmode

\noindent Case 1: $a = 0$.  Then by Lemma~\ref{lemma:[x0]} (b),
we have
$$ \alpha[x]-\beta^2 < [xa] < \alpha[x]-\beta.$$
Add $2$ and divide by $\alpha$ to get
$$[x] + 1 < {{[xa]+2} \over \alpha} < [x] + \alpha < [x]+2.$$
The result now follows.

\medskip

\noindent Case 2: $a = 1$.  Then, since $xa$ is a valid Fibonacci
representation, we must have $x = x'0$.  Following the
argument in Reble \cite{reble_zeckendorf_2008}, let $x' = e_t \cdots e_2$
so that 
$[x'] = \sum_{2\leq i \leq t} e_i F_i$.  Then
\begin{align*}
    [x'00] &= \sum_{2 \leq i \leq t} e_i F_{i+2}\\
    &= \sum_{2 \leq i \leq t} e_i (\alpha F_{i+1} + \beta^{i+1}) \\
    &= \alpha \sum_{2 \leq i \leq t} e_i F_{i+1} + 
        \sum_{2 \leq i \leq t} \beta^{i+1} \\
        &= \alpha [x'0] + \sum_{2 \leq i \leq t} \beta^{i+1}.
\end{align*}
Hence 
$$-\beta^2 = \beta^3 + \beta^5 + \beta^7 + \cdots <
[x'00] - \alpha [x'0] < \beta^4 + \beta^6 + \beta^8 + \cdots = \sqrt{5} - 2.$$
Now, adding $3 + \alpha [x'0]$ to both sides, we get
$\alpha[x'0] +\alpha^2 < [x'01] + 2  < \alpha[x'0] + \sqrt{5} +1$, or
in other words,
$\alpha[x] + \alpha^2 < [xa]+2 < \alpha[x] + \sqrt{5} + 1 $.
Dividing through by $\alpha$ gives
$[x] + 1< [x] +\alpha  < {{[xa]+2}\over \alpha} < [x] + 2$,
which proves the result for this case.
\end{proof}

We also need the following lemma. 
\begin{lemma}
Suppose $x, y$ are two words of the same length,
and $a, a', b, b' \in \Sigma_2$ and $D(x, y) = [y]-n[x]$ for some $n\geq 1$.  Then
$$ D(xa'a,yb'b) = D(x,y) + D(xa',yb') + b-na + b'-na'.$$
\label{lemma:fib-update-rule}
\end{lemma}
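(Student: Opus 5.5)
The plan is a direct computation that reduces everything to Lemma~\ref{lemma:[x0]}(a). Write $D(u,v)=[v]-n[u]$ for words $u,v$ of equal length. The bracket $[\cdot]$ is linear in the sense that appending a digit is governed by how the weights shift, so I will express $[xa'a]$ and $[yb'b]$ in terms of $[x]$, $[xa']$ and of $[y]$, $[yb']$.

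The key identity is
$$[wcd] = [w00] + [cd] = [w] + [w0] + 2c + d,$$
valid for any word $w$ and digits $c,d\in\Sigma_2$. Here $[cd]=cF_3+dF_2 = 2c+d$, since $[\cdot]$ is a weighted sum of the digits with weights $F_2,F_3,\ldots$ read from the right. The first equality holds because appending $cd$ gives digit $c$ weight $F_3$ and digit $d$ weight $F_2$, while shifting $w$ two places. The second equality is Lemma~\ref{lemma:[x0]}(a). I would also note $[wc] = [w0]+c$, so $[w0]=[wc]-c$. Substituting gives
$$[wcd] = [w] + [wc] - c + 2c + d = [w] + [wc] + c + d.$$

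Applying this with $w=x$, $(c,d)=(a',a)$ and with $w=y$, $(c,d)=(b',b)$ yields
\begin{align*}
[xa'a] &= [x]+[xa']+a'+a,\\
[yb'b] &= [y]+[yb']+b'+b.
\end{align*}
Then
$$D(xa'a,yb'b) = [yb'b]-n[xa'a] = \bigl([y]-n[x]\bigr) + \bigl([yb']-n[xa']\bigr) + (b-na) + (b'-na'),$$
which is $D(x,y)+D(xa',yb')+b-na+b'-na'$, as claimed. Since $x,y$ have the same length, so do $xa',yb'$ and $xa'a,yb'b$, so every $D$ in the formula is defined.

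There is no real obstacle here. The only point needing care is the indexing convention: the paper writes $[x]=\sum e_iF_i$ with the rightmost digit weighted by $F_2=1$. I would double-check $[cd]=2c+d$ and $[wc]=[w0]+c$ against it, for example $[10]=2$ and $[1]=1$. Validity of the representations (no $11$) is not needed, because Lemma~\ref{lemma:[x0]}(a) holds for arbitrary binary words.
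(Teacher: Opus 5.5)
Your proposal is correct and is essentially the paper's proof: both write $[xa'a]=[x]+[x0]+2a'+a$ using Lemma~\ref{lemma:[x0]}(a), then substitute $[x0]=[xa']-a'$ (and likewise for $y$) to regroup the terms into $D(x,y)+D(xa',yb')+b-na+b'-na'$. Your remark that the no-$11$ condition is not needed is also accurate, since Lemma~\ref{lemma:[x0]}(a) holds for arbitrary binary words.
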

\begin{proof}
Lemma~\ref{lemma:[x0]} (a) gives us that $[x00]= [x]+[x0]$.  Thus
$[x a' a] = [x] + [x0] + 2a' + a$ and
$[y b' b] = [y] + [y0] + 2b' + b$.
Hence
\begin{align*}
D(xa'a, yb'b) &= [y b' b]-n[x a' a]  \\
&= [y] + [y0] + 2b' + b - n([x] + [x0] + 2a' + a) \\
&= ([y]-n[x]) + ([yb'] - n[xa']) +  b'-na' + b-na \\
&= D(x,y) + D(xa', yb') + b'-na' + b-na.
\end{align*}
\end{proof}

\section{Recognizing Relations} \label{section:relations}

In this section we study the state complexity of recognizing some simple relations; for example the state complexity of the automaton accepting $x \times y$ if and only if $[x]+c=[y]$.

\subsection{Addition}

\begin{theorem} \label{theorem:[x]+c=[y]}
    For all integers $c \geq 0$, there exists an msd-first Fibonacci automaton with at most $O(\log  c)$
states accepting $x \times y$ where $[x] + c = [y]$.
\end{theorem}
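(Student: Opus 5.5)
The plan is to build the automaton directly, tracking differences of prefixes, using Lemma~\ref{lemma:fib-update-rule} with $n=1$. Read $x \times y$ msd-first, with the shorter representation padded by leading zeros. After reading prefixes $x',y'$ of equal length, let the state record the pair $(D(x'',y''),D(x',y'))$, where $x'',y''$ are the prefixes one symbol shorter and $D(u,v)=[v]-[u]$. By Lemma~\ref{lemma:[x0]}(a) with $n=1$, reading the next pair $[a,b]$ updates this as
\[
(d_1,d_2) \mapsto \bigl(d_2,\; d_1+d_2+(b-a)\bigr),
\]
since $[ua]=[u]+[u'']$ up to the added digit, where $u''$ is the prefix of $u$ one symbol shorter. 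This is essentially the content of Lemma~\ref{lemma:fib-update-rule}. The state also records the last input pair, so that the automaton can reject any occurrence of $11$ in either track. We accept exactly when the final $d_2=c$.

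The key step is to bound which difference values can still lead to acceptance. Suppose the current difference is $d=[y']-[x']$ and $k$ symbols remain. Then the final difference is roughly $d\,F_{k+2}$ plus $[y_{\rm rest}]-[x_{\rm rest}]$, and the latter lies in $(-F_{k+2},F_{k+2})$. More precisely, $[y'z]-[x'w] = [y'0^k]-[x'0^k] + [z]-[w]$. By Lemma~\ref{lemma:[x]-floor} and Lemma~\ref{lemma:[x0]}(b), $[u0^k]\approx \alpha^k[u]$ with bounded error. Hence if $|d|\ge 2$, the final difference grows like $|d|\alpha^k$ minus a bounded multiple of $\alpha^k$, so it either overshoots $c$ or becomes negative. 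A cleaner route is to argue that $d<0$ can never recover except for small cases, and that $d>c$ can never come back down to $c$. Both would use Lemma~\ref{lemma:[x0]}(c) and the monotonicity $[u0]\ge[u]$.

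This argument shows that live states have $d_2\in[-O(1),\,c]$, but that alone gives $O(c)$ states, not $O(\log c)$. To get down to $O(\log c)$, I will observe more carefully that at step $j$ (counting from the end) the difference must be near $c/F_{j}$. Because $d$ is multiplied by roughly $\alpha$ at each step and only a bounded correction is added, the final value forces $d_2$ at $k$ remaining symbols to lie within $O(1)$ of $c/\alpha^k$. The catch is that the automaton does not know $k$ in advance. Instead, I will note that along any accepted path the sequence of differences is nondecreasing, up to bounded perturbation, and roughly geometric. So the pairs $(d_1,d_2)$ that are actually reachable and co-accessible must satisfy $d_2\approx\alpha d_1$ within $O(1)$. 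Each value of $d_2$ then admits only $O(1)$ partners $d_1$.

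The main obstacle is the counting step. Even with $d_2\approx \alpha d_1$, the values $d_2$ range over $[0,c]$, which is still too many. My approach is to exploit the fact that the target $c$ is fixed, so the backward trajectory from $c$ is essentially determined. Running the recurrence in reverse, $d_1=d_2'-d_2-(b-a)$, from the final state $c$ produces only $O(1)$ branching per step up to bounded error. Each co-accessible pair is then within $O(1)$ of one of the $O(\log c)$ points $(c/\alpha^{k+1},c/\alpha^k)$ (rounded). The leading-zero loop on the initial state together with the states having small $|d|$ contributes $O(1)$ more. Summing $O(1)$ states per level over the $O(\log_\alpha c)$ levels, plus the states with $d_2=c$ that wait on trailing $[0,0]$-type inputs, gives $O(\log c)$ co-accessible states, and dead states are not counted.
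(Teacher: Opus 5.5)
Your plan is essentially the paper's proof: the same states $[a,b,d',d]$, the relation $|d-\alpha d'|<2$ (which holds for every reachable state directly by Lemma~\ref{lemma:[x0]}(b), so each $d$ has $O(1)$ partners), and the backward count from the accepting states with $d=c$, where your estimate that a co-accessible state $k$ steps from acceptance has $d$ within $O(1)$ of $c/\alpha^k$ is exactly what the paper's level recurrences ($r_{i+1}<r_i/\alpha+4$ and the geometric decay of $D_i$) establish. One correction: your update $(d_1,d_2)\mapsto(d_2,\,d_1+d_2+(b-a))$ omits the term $b'-a'$ of Lemma~\ref{lemma:fib-update-rule} (the right identity is $[u''a'a]=[u'']+[u''a']+a'+a$), so the transition must use the stored previous pair $[a',b']$; also, for $c\ge 1$ no accepting state survives a trailing $[0,0]$, since appending zeros scales the difference by roughly $\alpha$.
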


\begin{proof}
We assume $c\geq 1$, since the case $c = 0$ is trivial.
 We design a Fibonacci automaton $M_c$ that reads input $x\times y$ in parallel and accepts if and only if $[y]=[x]+c$.
We modify the same approach we used previously for the case of base-$k$ representation \cite{moradi_complexity_2026}.
Namely, we first find a range of values $I$ containing $c$ such that if $[y]-[x]$ ever falls outside $I$, then $[y']-[x']$ stays outside $I$ for all right extensions $x'$ of $x$ and $y'$ of $y$. 

Suppose $[y]<[x]$. By Lemma~\ref{lemma:[x0]} (c) we see that $y < x$ under the lexicographic order.   Hence, if $y'$ is a right-extension of $y$ and $x'$ is a right-extension of $x$, we have
$y' < x'$ under the lexicographic order, and so $[y'] < [x']$.  

Now suppose $[y] > [x]+ c $.  We have
\begin{align*}
    [yb] - [xa] &= [y0] - [x0] + b -a 
    > (\alpha[y]-\beta^2) - (\alpha[x]-\beta) - 1 \\
    &= \alpha ([y]-[x]) - \beta^2 + \beta - 1 
    \geq \alpha (c+1) - \beta^2 + \beta - 1 
    = \alpha c- \beta^2 
    > c.
\end{align*}
Therefore, we can take the range $I$ to be $[0,c]$.

Our automaton $M_c=(Q, \Sigma_2, q_0, \delta, A)$ is constructed as follows.
\begin{align*}
     Q &\subseteq \{[a, b, d', d] \suchthat a, b \in \{ 0,1 \} \text{ and } d, d' \in I\}, \\
     q_0 &= [0, 0, 0, 0], \\
     \delta([a', b', d'', d'], [a, b]) &= [a, b, d', d] \ \text{where} \ a,b,a',b \in \{0,1\}, aa', bb' \not=11, \text{ and } \\
     & \quad\quad d= d'
     + d'' +b' - a' + b - a, \\
     A &= \{[a, b, d', d] \suchthat a, b \in \{0,1 \} \text{ and } d = c\}.
 \end{align*}
The correctness of the transition rule now follows from Lemma~\ref{lemma:fib-update-rule} (with $n=1$) and an easy induction on the length of the input.

Our next task is to define $Q$ more precisely. 
Define $d' = [y]-[x]$, $d = [yb]-[xa]$,
$\varepsilon_x = [x0] - \alpha [x]$, and $\varepsilon_y = [y0] - \alpha [y]$.    Thus
\begin{align*}
d &= [yb] - [xa] 
= [y0]-[x0] + b-a \\
&= \alpha[y] + \varepsilon_y - \alpha[x] - \varepsilon_x + b-a = \alpha d' + \varepsilon_y - \varepsilon_x + b-a.
\end{align*}
Rewriting, we get
\begin{equation}
 d' - d/\alpha = {{\varepsilon_x -  \varepsilon_y +a-b} \over \alpha}.
 \label{equation:d'-d/alpha}
 \end{equation}
There are now three cases to consider.  By Lemma~\ref{lemma:[x0]} (b) we know $-\beta^2 < \varepsilon_x, \varepsilon_y < -\beta$. Hence the cases are as follows.

\medskip

\noindent{\it Case 1}:  $a=b$.
Then $d/\alpha -1 < d'  < d/\alpha + 1$.

\medskip

\noindent{\it Case 2}:  $a = 0$, $b = 1$.   Then $d/\alpha -2 < d'  < d/\alpha$.

\medskip

\noindent{\it Case 3}:  $a = 1$, $b = 0$.  Then $d/\alpha < d'  < d/\alpha +2$.

\medskip

This gives eight possible forms of states corresponding to a particular $d$:
\begin{align*}
   & \{ [0,0,\lfloor d/\alpha \rfloor,d],
[0,0, \lceil d/\alpha \rceil,d],
[1,1,\lfloor d/\alpha \rfloor,d],
[1,1,\lceil d/\alpha \rceil,d],\\
& \quad [0,1,\lfloor d/\alpha \rfloor, d],
[0,1, \lfloor d/\alpha \rfloor -1, d],
[1,0, \lceil d/\alpha \rceil, d],
[1,0, \lceil d/\alpha \rceil + 1, d] \}.
\end{align*}

Starting from a final state, of the form $[a,b,d',c]$, we determine the possible predecessors of this state, and the possible predecessors of those, etc.  Note that by Equation (\ref{equation:d'-d/alpha}) if $[y]-[x] \geq 4$, then for any right extension of $x, y$ the difference only increases.  We group the predecessors together in levels.  The states of the form $[a,b,d',c]$ are at level $0$, and the predecessors of these are at level $1$, and so forth.
We now have to bound the number of states in each level, and the number of levels. 

States in level $i$ are of the form $[a,b,d',d]$ and
$d \in [D_i, D_i + r_i]$ for certain $D_i, r_i$.  Note that $D_0 = c$ and $r_0 = 0$.  Since the smallest possible $d'$ in a level $i$ state is $\lfloor D_i/\alpha \rfloor - 1$ and the largest possible is $\lceil (D_i +r_i)/\alpha \rceil + 1$, it follows that
$r_{i+1} < r_i/\alpha + 4$.  Starting with $r_0 = 0$, using this inequality four times shows that $r_i \leq 8$ for all $i$. Thus there are at most $8+1 = 9$ possible values of $d$ among all states of any given level.

To estimate the number of possible levels, note that Equation \eqref{equation:d'-d/alpha} gives $d' < (d+2)/\alpha$ and thus
$D_{i+1} + r_{i+1} < (D_i+10)/\alpha < D_i/1.1$
provided $D_i \geq 22$.  Thus
there are at most $\log_{1.1} c = O(\log c)$ levels until $D_i \leq 22$.
When the largest $d$ in a level is smaller than $22$, we cease our computation of levels and simply include all states $[a,b,d',d]$
with $d \leq 22$ (a constant number).

This shows that there are only $O(\log c)$ co-accessible states and we are done.
\end{proof}
A similar proof can be provided for $[x]-c = [y]$.

\begin{theorem}
    For all $c \geq 0$, there exists an automaton with at most $O(\log  c)$
    states accepting $x \times y$ where $[x] - c = [y]$.
\end{theorem}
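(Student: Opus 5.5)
The plan is to reduce this to Theorem~\ref{theorem:[x]+c=[y]} by swapping coordinates. The relation $[x]-c=[y]$ is equivalent to $[y]+c=[x]$. So the language we need is $\{x\times y \suchthat [y]+c=[x]\}$. This is the image of $L_c=\{x\times y \suchthat [x]+c=[y]\}$ under the alphabet permutation $[a,b]\mapsto[b,a]$ on $\Sigma_2\times\Sigma_2$.

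We take the automaton $M_c$ built in the proof of Theorem~\ref{theorem:[x]+c=[y]} and relabel every transition on $[a,b]$ as a transition on $[b,a]$. The state set, initial state and accepting set stay the same. A word $w$ is then accepted by the new automaton iff its coordinate swap is accepted by $M_c$. Swapping letterwise turns $x\times y$ into $y\times x$, so the new automaton accepts $x\times y$ exactly when $[y]+c=[x]$, i.e.\ $[x]-c=[y]$. The number of states is unchanged, hence $O(\log c)$.

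A few conventions need checking, and they are routine. The self-loop on $[0,0]$ at the initial state is preserved, since $[0,0]$ is fixed by the swap. So behavior is invariant under leading zeros. The prohibition on $11$ in each coordinate is symmetric in the two tracks. Determinism is preserved because the relabeling is a bijection on letters. The case $c=0$ is trivial, as before.

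If a self-contained argument is preferred instead, one can redo the proof of Theorem~\ref{theorem:[x]+c=[y]} with $d=[x]-[y]$ in place of $[y]-[x]$. The same three cases of Equation~(\ref{equation:d'-d/alpha}) then apply with $a,b$ interchanged, and the level-counting argument goes through verbatim. In either form there is no real obstacle; the only point needing care is confirming that the symmetry respects the leading-zero and validity conventions.
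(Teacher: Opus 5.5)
Your proposal is correct and is essentially the paper's proof. The paper takes the automaton from Theorem~\ref{theorem:[x]+c=[y]} and changes the update rule to $g = d + d' + a - b + a' - b'$, so that it tracks $[x]-[y]$ instead of $[y]-[x]$; up to renaming states $[a,b,\cdot,\cdot] \mapsto [b,a,\cdot,\cdot]$, this is exactly the automaton your coordinate swap produces, and it coincides with your second, ``self-contained'' variant, while your relabeling by the letter permutation $[a,b]\mapsto[b,a]$ has the minor advantage of using the earlier theorem as a black box, so the interval and level-counting analysis need not be re-checked.
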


\begin{proof}
    Consider the automaton provided for the proof of Theorem \ref{theorem:[x]+c=[y]}. Here the only difference is that we keep track of $[x]-[y]$ instead of $[y]-[x]$. So we only need to change the transition function to the following:
    \[
    \delta([a',b',d,d'],[a,b]) =
    [a,b,g,d],  \text{ if $g \in I$ where $g = d + d' + a-b + a'-b'$}\text{ and $aa', bb' \not= 11$}.
    \]
\end{proof}

\subsection{Multiplication}

Let $n \geq 1$ be a natural number and $0 \leq c < n$.
We provide an upper bound on the number of states required for an automaton to 
recognize the relation $Y = nX + c$.

We let $L_{n,c}$ be the language
of all words over $\Sigma_2 \times \Sigma_2$ where the first component represents some integer
$[x]$ and the second component represents $n[x] +c$.
More precisely, 
$$ L_{n,c} := \{ x \times y \in (\Sigma_2 \times \Sigma_2)^* \suchthat [y] = n[x] + c\} .$$
For example, Figure~\ref{figure:M_2,0}
depicts a minimal DFA recognizing $L_{2,0}$ computed by {\tt Walnut}.  Here the state
numbered $0$ is the initial state (denoted by the headless arrow) and
the accepting states $0$, $3$, $6$, and $8$ are indicated by double circles.
\begin{figure}[htb]
\begin{center}
\includegraphics[width=6.5in]{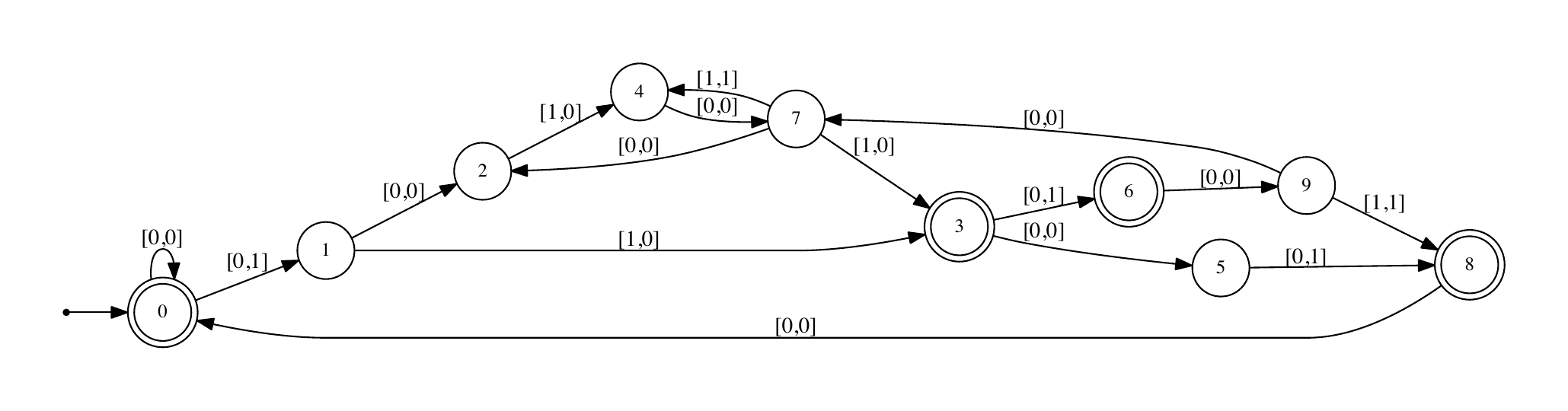}
\end{center}
\caption{Multiplication by $2$ in Fibonacci representation.}
\label{figure:M_2,0}
\end{figure}

\begin{theorem} \label{theorem:L_n,c-sc}
There is a DFA  of $O(n^2)$ states that accepts $L_{n,c}$.
\end{theorem}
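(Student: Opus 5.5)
We will mimic the construction in the proof of Theorem~\ref{theorem:[x]+c=[y]}, now with the difference $D(x,y) = [y]-n[x]$ in place of $[y]-[x]$. The automaton reads $x\times y$ msd-first. Its states have the form $[a,b,d',d]$, where $a,b\in\{0,1\}$ are the last bits read, $d' = D(x,y)$, and $d = D(xa,yb)$. The start state is $[0,0,0,0]$, and a state is accepting iff $d = c$. On reading $[a'',b'']$ with $aa'', bb''\neq 11$, the new state is $[a'',b'',d,g]$, where by Lemma~\ref{lemma:fib-update-rule}
\[
g = d' + d + b - na + b'' - na''.
\]
Correctness then follows by induction on the input length, exactly as before. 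Everything therefore reduces to bounding the number of co-accessible states, i.e.\ showing that the values of $d$ (and $d'$) that can still lead to acceptance lie in a window of size $O(n)$. That gives $O(n)\cdot O(n)\cdot 4 = O(n^2)$ states.

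\textbf{Step 1: bounding $D$.} We use Lemma~\ref{lemma:[x0]}(b) in the form $[x0] = \alpha[x] + \varepsilon_x$ with $-\beta^2<\varepsilon_x<-\beta$. This gives
\[
D(xa,yb) = \alpha D(x,y) + \varepsilon_y - n\varepsilon_x + b - na .
\]
The error term $\varepsilon_y - n\varepsilon_x + b - na$ has absolute value at most $Kn$ for a small absolute constant $K$ (roughly $2$). Hence if $|D(x,y)| > Kn/(\alpha-1) = K\alpha n$, then $|D|$ grows by a factor of at least about $\alpha$ at each step and keeps its sign. In that case no extension can ever reach $D = c\in[0,n)$. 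So we may restrict to states with $d,d'\in[-Cn, Cn]$ for a suitable constant $C$ (say $C=6$), sending everything else to the dead state.

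\textbf{Step 2: counting.} There are at most $(2Cn+1)^2$ choices of $(d',d)$ and $4$ choices of $(a,b)$, so $O(n^2)$ states in total. We could sharpen this by noting that $d'$ is determined by $d$ up to $O(n)$ possibilities via $d' \approx (d - \text{error})/\alpha$, but this is unnecessary for an $O(n^2)$ bound. Leading zeros are harmless because $[0,0,0,0]$ loops to itself on $[0,0]$.

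\textbf{Main obstacle.} The hard part is Step 1: making the escape argument rigorous, i.e.\ verifying that once $D$ leaves the window, it stays outside it under every valid extension. The difficulty is that $D$ can be negative, and the error term involves $n\varepsilon_x$, which is of order $n$ rather than a constant. The first approach is a direct induction. If $D(x,y) > Cn$ and $D(xa',yb') \ge D(x,y)$, then $D(xa'a,yb'b) \ge D(xa',yb') + (Cn - 2n) > D(xa',yb')$, so $D$ stays above $Cn$; the symmetric argument handles $D<-Cn$. This requires tracking two consecutive values, which is exactly what the state already stores. In the first step after leaving the window, the monotonicity $D(xa',yb')\ge D(x,y)$ itself comes from the $\alpha$-expansion identity above when $C$ is large enough.
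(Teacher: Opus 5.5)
Your proof is correct and is essentially the paper's construction: the same states $[a,b,d',d]$, the same transition rule from Lemma~\ref{lemma:fib-update-rule}, and an $O(n)$ window for the differences, giving $O(n^2)$ states. The only difference is how the window is obtained. Your single estimate $D(xa,yb)=\alpha D(x,y)+E$ with $|E|<2n$ already forces strict monotone escape once $|D|>2\alpha n$, so your ``main obstacle'' is in fact settled by Step~1 alone; this gives a cruder symmetric window $[-Cn,Cn]$. The paper instead obtains the sharp interval $I_n=[-n,2n-1]$ via Lemmas~\ref{lemma:I_n-lower}, \ref{lemma:I_n-upper-2n} and \ref{lemma:I_n-upper-n}. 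The last of these is the analogue of your two-step induction, and the paper needs it because the threshold $2n$ is not preserved in a single step.
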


A result related to Theorem \ref{theorem:L_n,c-sc} was proved by Charlier et al.~\cite{charlier_minimal_2011}.  They proved that $2n^2$ states are necessary and sufficient
for a DFA
to accept multiples of the constant $n$ with input in the Fibonacci numeration system.
However, despite the close relationship of their result and ours, we see
no direct way to get our result from theirs.

For input in Fibonacci representation, we use the same strategy outlined for recognizing this relation with input in base $k$ from \cite{moradi_complexity_2026}.  Namely, we use an automaton $M_{n,c}$ to keep track
of the difference 
function $D(x,y) := [y] - n [x]$. Here there are complications that arise from four things:
\begin{itemize}
\item In base $k$, reading a single letter $a$
from the input changes the integer read from $n$ to
$kn+a$.  But the formula is more involved for
Fibonacci representation.

\item It is no longer sufficient to maintain just a single
$d$, the current difference. We also need to maintain another related quantity,
$d'$, the previous difference.  

\item We have to ensure that the inputs are
valid Fibonacci representations.  This means we have
to store, in the state, the last letter read for both
input words.

\item It is no longer so simple to determine the interval $I_n$
for $d$ and $d'$, outside of which no right extension would be accepted.
\end{itemize}

As we mentioned,
the idea is that on input $xa' \times yb'$, the automaton should
keep track of $d = D (xa', yb')$.  However, considering Lemma \ref{lemma:fib-update-rule}, this would not be sufficient; we also need to keep track of $d' = D(x,y)$.
Moreover, we need to determine the interval $I_n$ so that if
$d$ ever falls outside this interval, we know that no right extension
of the input can ever lie in $L_n$.  Thus states are $4$-tuples
of the form $[a',b', d, d'] \in \Sigma_2 \times \Sigma_2 \times I_n \times I_n$.

Assuming the existence of $I_n$, the definition of the transition function now follows immediately from Lemma~\ref{lemma:fib-update-rule}.
Namely, 
$M_{n,c} = (Q, \Sigma_2, q_0, \delta, A)$ is defined as follows:
\begin{align*}
Q &= \{ [a',b',d,d'] \suchthat d, d' \in I_n \text{ and } a',b' \in \Sigma_2 \}, \\
\Sigma &=  \Sigma_2 \times \Sigma_2, \\
q_0 &= [0, 0, 0, 0], \\
\delta([a',b',d,d'],[a,b]) 
&=
    [a,b,g,d],  \text{ if $g \in I_n$ where $g = d + d' + b-na + b'-na'$}\\
    & \quad\quad \text{ and $aa', bb' \not= 11$};
     \\
A &= \{ [a', b', d, d'] \suchthat d = c \}.
\end{align*}

So task that remains is  to determine the
interval $I_n$. We show that we can take $I_n$ to be the closed interval $[-n, 2n-1]$.
\begin{lemma}
Suppose $D(x,y) \leq -n-1$.  Then $D(xa,yb) \leq -n-1$.
\label{lemma:I_n-lower}
\end{lemma}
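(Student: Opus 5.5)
We write $D(xa,yb) = [yb]-n[xa] = [y0]+b - n[x0] - na$ and bound each piece with Lemma~\ref{lemma:[x0]}~(b). Set $\varepsilon_x = [x0]-\alpha[x]$ and $\varepsilon_y = [y0]-\alpha[y]$, both lying in the open interval $(-\beta^2,-\beta)$. Then
\begin{align*}
D(xa,yb) &= \alpha[y] + \varepsilon_y + b - n(\alpha[x] + \varepsilon_x) - na \\
&= \alpha D(x,y) + \varepsilon_y - n\varepsilon_x + b - na .
\end{align*}
This is the same first-order update used in the proof of Theorem~\ref{theorem:[x]+c=[y]}, now with the factor $n$ on the $x$-side. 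Here we treat $x$, $y$ as arbitrary words (padded to equal length), so Lemma~\ref{lemma:[x0]}~(b) must be applied to valid representations; I would assume, as in the automaton, that inputs are valid Fibonacci representations. Leading zeros do not affect $[\,\cdot\,]$, so this assumption is harmless.

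Next, bound the error terms from above. Since $\varepsilon_y < -\beta$, $-n\varepsilon_x < n\beta^2$, $b \le 1$ and $-na \le 0$, we get
\[
D(xa,yb) < \alpha D(x,y) - \beta + n\beta^2 + 1 .
\]
Using $D(x,y) \le -n-1$ and $\alpha > 0$, this gives
\[
D(xa,yb) < -\alpha(n+1) + n\beta^2 + 1 - \beta = -n(\alpha-\beta^2) - \alpha + 1 - \beta .
\]
Numerically, $\alpha - \beta^2 = \alpha - (1-\beta) = \alpha+\beta-1 = 0$. That is disappointing, because it leaves only $D(xa,yb) < 1 - \alpha - \beta = 0$, which is too weak.

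This is the main obstacle: the crude bound loses because $\varepsilon_x$ near $-\beta^2$ costs up to $n\beta^2 \approx 0.38n$. To fix it, I would not discard the $\varepsilon$'s separately. Instead I would use that $[x0]$ and $[y0]$ are integers, so $D(xa,yb)$ is an integer and strict inequalities can be tightened. A better route is to compare $[x0]$ with $[x]$ directly, avoiding $\alpha$ altogether. Write $[x0] = [x] + [x']$, where $x'$ is $x$ with its last letter deleted (using $[x00]=[x]+[x0]$ from Lemma~\ref{lemma:[x0]}~(a) applied one level down), and similarly for $y$. Then
\[
D(xa,yb) = D(x,y) + D(x^-,y^-) + b - na ,
\]
where $x^-,y^-$ are the one-letter-shorter prefixes; this is Lemma~\ref{lemma:fib-update-rule} in disguise. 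So I would prove the claim by induction on length, simultaneously with a statement about the pair of consecutive differences.

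Concretely, I would try to show that once $D \le -n-1$, the previous difference satisfies $D(x^-,y^-) \le 0$, or more precisely is at most about $D(x,y)/\alpha$ plus a small constant. Using Lemma~\ref{lemma:[x]-floor} to express $[x^-]$ from $[x]$, we get
\[
n[x^-] \ge n\left(([x]+2)/\alpha - 2\right), \qquad [y^-] \le ([y]+2)/\alpha - 1 ,
\]
so $D(x^-,y^-) \le D(x,y)/\alpha + O(1) + \text{(terms in } n)$. The delicate point is again the $n$-sized slack from the floor. If this fails, I would fall back on a lexicographic argument analogous to the case $[y]<[x]$ in Theorem~\ref{theorem:[x]+c=[y]}. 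There, $D \le -n-1$ means $[y]+n+1 \le n[x]$, and appending letters multiplies both sides by roughly $\alpha$, while $[yb]$ gains at most $1$ and the error from $n[x0]$ versus $\alpha n[x]$ is controlled. I would carefully redo the error estimate using the exact Reble-type formula $[x00]-\alpha[x0] = \sum \beta^{i+1}$ to get the sharper two-sided bound.
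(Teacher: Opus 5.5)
There is a genuine gap, and it comes from one arithmetic slip that makes you discard a correct argument. Your first estimate,
\[
D(xa,yb) < \alpha D(x,y)-\beta+n\beta^2+1 \le -n(\alpha-\beta^2)+1-\alpha-\beta ,
\]
is exactly the paper's proof. But $\beta=(1-\sqrt5)/2$ satisfies $\beta^2=\beta+1=(3-\sqrt5)/2\approx 0.382$, not $\beta^2=1-\beta$ (that quantity is $\alpha$). Hence $\alpha-\beta^2=\alpha-\beta-1=\sqrt5-1\approx 1.236$. You correctly computed $1-\alpha-\beta=0$, so the bound reads $D(xa,yb)<-(\sqrt5-1)n<-n$. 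Since $D(xa,yb)$ is an integer, $D(xa,yb)\le -n-1$ follows at once. Your own heuristic already points this way: the loss $n\beta^2-\beta+1\approx 0.382n+1.618$ is swamped by the gain $\alpha(n+1)\approx 1.618n+1.618$ that comes from $D(x,y)\le -n-1$.

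As submitted, the proposal therefore contains no completed proof, and the replacement route you sketch would not work as stated. The relation between consecutive differences carries an error of order $n$, coming from the terms $n\varepsilon_x$ and $na$, not a small constant. The intermediate claim that $D(x,y)\le -n-1$ forces $D(x^-,y^-)\le 0$ is false. Take $n=20$, $x^-=0^6\,10100$ and $y^-=10101000001$, so $[x^-]=11$ and $[y^-]=221$. With $x=x^-1$ and $y=y^-0$ you get $[x]=19$ and $[y]=358$. Then $D(x^-,y^-)=1$, while $D(x,y)=358-380=-22\le -21$. So the induction on pairs of consecutive differences via Lemma~\ref{lemma:fib-update-rule} does not go through in the form you describe. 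The lexicographic and exact Reble-formula refinements are also unnecessary, because the two-sided bound of Lemma~\ref{lemma:[x0]}~(b) is already strong enough.
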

\begin{proof}
We have
\begin{align*}
D(xa,yb) &= [yb] - n [xa] \\
&= [y0]-n[x0] + b-na \\
&< (\alpha [y] - \beta)  - n (\alpha[x] - \beta^2) + 1 \\
&= \alpha D(x,y) - \beta + n \beta^2 + 1\\
&\leq \alpha (-n-1) - \beta + n \beta^2 + 1 \\
& = -n(\alpha - \beta^2) + 1-\alpha-\beta \\
& = -n(\alpha - \beta^2) \\
& < -n \\ 
& \leq -n-1.
\end{align*}
\end{proof}

\begin{lemma}
Let $a, b \in \Sigma_2$, and
suppose $D(x,y) \geq 2n$.  Then $D(xa,yb) \geq n$.
\label{lemma:I_n-upper-2n}
\end{lemma}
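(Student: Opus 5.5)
The plan is to mirror the proof of Lemma~\ref{lemma:I_n-lower}, with the inequalities reversed: expand $D(xa,yb)$ via appending a digit, bound each piece using Lemma~\ref{lemma:[x0]}~(b), and arrive at a lower bound that is linear in $D(x,y)$ with slope $\alpha$. As there, I assume $x$ and $y$ are valid Fibonacci representations, so that Lemma~\ref{lemma:[x0]}~(b) applies. (For $x = \epsilon$ it holds trivially, since $[0] = 0$ and $-\beta^2 < 0 < -\beta$.)

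First, by definition, $[xa] = [x0] + a$ and $[yb] = [y0] + b$, so
\[ D(xa,yb) = [y0] - n[x0] + b - na. \]
Next I will use the lower bound $[y0] > \alpha[y] - \beta^2$, the upper bound $[x0] < \alpha[x] - \beta$, and the worst cases $b \geq 0$ and $a \leq 1$. This gives
\[ D(xa,yb) > \alpha[y] - \beta^2 - n(\alpha[x] - \beta) - n = \alpha D(x,y) - \beta^2 + n\beta - n. \]
Substituting the hypothesis $D(x,y) \geq 2n$ yields
\[ D(xa,yb) > n(2\alpha + \beta - 1) - \beta^2. \]
Since $\alpha + \beta = 1$, we have $2\alpha + \beta - 1 = \alpha$, so the bound becomes
\[ D(xa,yb) > n\alpha - \beta^2 = n + \bigl(n(\alpha - 1) - \beta^2\bigr). \]

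The only step needing care is checking that this final constant is nonnegative. Here $\alpha - 1 = -\beta \approx 0.618$ and $\beta^2 = 1 - (\alpha - 1) \approx 0.382$. Hence for every $n \geq 1$,
\[ n(\alpha - 1) - \beta^2 \geq 2\alpha - 3 > 0. \]
So $D(xa,yb) > n$ strictly, and since $D(xa,yb)$ is an integer, $D(xa,yb) \geq n$, as required.

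The main obstacle I expect is bookkeeping of signs. The negative coefficient $-n$ on $[x0]$ forces using the upper bound $-\beta$ for $\varepsilon_x$ rather than the lower bound. Also, the possible loss of $-na$ when $a = 1$ is exactly what makes the conclusion only $\geq n$ rather than $\geq 2n$. I will double-check the identity $2\alpha + \beta - 1 = \alpha$ and the numerical comparison $\alpha - 1 > \beta^2$ (equivalently $\alpha > \tfrac{3}{2}$), which is where the margin comes from.
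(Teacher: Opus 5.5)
Your proposal is correct and follows the same chain of inequalities as the paper. You expand $[xa]=[x0]+a$ and $[yb]=[y0]+b$, apply Lemma~\ref{lemma:[x0]}~(b) with the correct signs, and substitute $D(x,y)\ge 2n$ to reach $(2\alpha+\beta-1)n-\beta^2$. The only difference is that you simplify this to $\alpha n-\beta^2>n$, which makes integrality unnecessary (it would even give $D(xa,yb)\ge n+1$). The paper instead stops at the weaker bound $>n-1$ and then uses integrality.
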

\begin{proof}
We have
\begin{align*}
D(xa,yb) &= [yb]-n[xa] \\
&= ([y0] + b) - n([x0] + a) \\
&\geq [y0] - n[x0] - n \\
&> (\alpha [y] - \beta^2) - n(\alpha[x] - \beta) -n \\
&= \alpha([y]-n[x]) -\beta^2 + n\beta - n \\
&= \alpha D(x,y) - \beta^2 + n\beta - n \\
&\geq 2n\alpha - \beta^2 + n \beta - n \\
&= (2\alpha + \beta - 1)n - \beta^2 \\
&> n - 1.
\end{align*}
\end{proof}

\begin{lemma}
Let $a,b, a', b' \in \Sigma_2$ such that
either $a = 0$ or $a' = 0$.
Suppose $D(x,y) \geq n$ and $D(xa',yb') \geq n$.  Then
$D(xa'a, yb'b) \geq n$.
\label{lemma:I_n-upper-n}
\end{lemma}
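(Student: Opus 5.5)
Since $D(x,y)$ and $D(xa',yb')$ are given, I would compute $D(xa'a,yb'b)$ directly from Lemma~\ref{lemma:fib-update-rule}:
$$D(xa'a,yb'b) = D(x,y) + D(xa',yb') + b-na + b'-na'.$$
With the hypotheses $D(x,y)\ge n$ and $D(xa',yb')\ge n$, this gives
$$D(xa'a,yb'b) \ge 2n + b + b' - n(a+a').$$
The constraint that $a=0$ or $a'=0$ means $a+a'\le 1$. Since $b,b'\ge 0$, we get $D(xa'a,yb'b)\ge 2n-n = n$, which is the claim.

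So the plan is a two-line argument: first invoke Lemma~\ref{lemma:fib-update-rule}, then bound the error term $b-na+b'-na'$ below by $-n$ using $a\cdot a'=0$. There is essentially no obstacle. The only point worth checking is that Lemma~\ref{lemma:fib-update-rule} applies with $x,y$ of the same length, which holds because the automaton reads pairs in parallel, so the padded words always have equal length.

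I would also remark why the hypothesis $aa'=0$ is natural. It is exactly the validity condition (no factor $11$) on the first input. Without it, the error term could reach $-2n$ and the bound would fail. This is why the lemma is phrased with that hypothesis rather than the cruder Lemma~\ref{lemma:I_n-upper-2n}.
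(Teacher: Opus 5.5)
Your proposal is correct and matches the paper's proof: both apply Lemma~\ref{lemma:fib-update-rule} and then bound the correction term below by $-n$, using $b,b'\ge 0$ together with $a+a'\le 1$, which follows from $a=0$ or $a'=0$. Your remarks that $x,y$ have equal length and that the hypothesis on $a,a'$ is just the no-$11$ validity condition are accurate.
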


\begin{proof}
From Lemma~\ref{lemma:fib-update-rule} we have
\begin{align*}
D(xa'a, yb'b) &= D(x,y) + D(xa',yb') + b-na + b'-na' \\
&\geq 2n -n(a + a') \\
& \geq n.
\end{align*}
Note that we know
that at least one of $a, a'$ equals $0$, for otherwise,
we do not have a valid Fibonacci representation.
\end{proof}

\begin{theorem} \label{theorem:I_n}
We can take $I_n = [-n,2n-1]$.
\end{theorem}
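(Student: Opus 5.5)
We must show that if some prefix pair $x \times y$ (of a valid input) has $D(x,y) \notin [-n,2n-1]$, then no valid right extension $x' \times y'$ of it can satisfy $D(x',y') = c$. Since $0 \le c < n$, it suffices to show that every later difference stays $\le -n-1$, or stays $\ge n$. Once that is proved, the automaton $M_{n,c}$ can safely discard any state whose new difference $g$ leaves $I_n$. The transitions are correct by Lemma~\ref{lemma:fib-update-rule}.

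The lower side is immediate. If $D(x,y) \le -n-1$, then Lemma~\ref{lemma:I_n-lower} and induction on the length of the extension give $D(x',y') \le -n-1 < c$ for all right extensions.

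For the upper side, suppose $D(x,y) \ge 2n$ for some prefix. We prove by induction that from this point on, every pair of consecutive prefix differences is $\ge n$. The first step is Lemma~\ref{lemma:I_n-upper-2n}: after reading one more letter pair $[a',b']$ we get $D(xa',yb') \ge n$, and of course $D(x,y) \ge 2n \ge n$. So the last two differences are both $\ge n$. For the inductive step, suppose the last two differences $D(z,w)$ and $D(za',wb')$ are both $\ge n$, and we read a further pair $[a,b]$. Validity of the first component forces $aa' \ne 11$, so Lemma~\ref{lemma:I_n-upper-n} applies and gives $D(za'a,wb'b) \ge n$. The pair of the last two differences is again $(\ge n, \ge n)$, and the induction continues. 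Hence every extension has difference $\ge n > c$ and is rejected.

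The step needing care is the ``entry'' into the upper region. A difference could first exceed the interval at $2n$ while the previous difference is small, and then Lemma~\ref{lemma:I_n-upper-n} cannot be used directly. This is exactly why the threshold is $2n$ rather than $n$: Lemma~\ref{lemma:I_n-upper-2n} supplies the second difference $\ge n$ needed to start the two-term induction. One also needs the base case where the offending prefix is the empty-ish start. This is harmless: the initial state has $d = d' = 0 \in I_n$, so any exit from $I_n$ happens at some genuine step, with the previous difference defined.

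Finally, every difference along an accepted path lies in $I_n$. Lemma~\ref{lemma:fib-update-rule} then shows that the state $[a',b',d,d']$ reached records exactly the last letters together with $D$ of the current prefix and of the preceding one. So $M_{n,c}$ accepts exactly $L_{n,c}$, with $|Q| \le 4 \cdot (3n)^2 = O(n^2)$.
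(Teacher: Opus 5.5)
Your proof is correct and follows the same route as the paper. Lemma~\ref{lemma:I_n-lower} handles the lower side, and on the upper side Lemma~\ref{lemma:I_n-upper-2n} supplies the second consecutive difference $\ge n$, which starts the induction via Lemma~\ref{lemma:I_n-upper-n}. You make explicit the two-term induction (and why the entry threshold must be $2n$) that the paper compresses into the phrase ``together show.''
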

\begin{proof}
Lemma~\ref{lemma:I_n-lower} shows that if $D(x,y) \leq -n-1$,
then the same inequality holds for every right extension
of $x$ and $y$.  

Lemmas~\ref{lemma:I_n-upper-2n} and \ref{lemma:I_n-upper-n} together show
that if $D(x,y) \geq 2n$, then $D(x',y') \geq n$
holds for every right extension $x'$ of $x$ and
$y'$ of $y$.
\end{proof}

Now a simple induction on
$|x| = |y|$ shows that 
$$\delta(q_0, xa' \times yb') = 
[a', b', D(xa',yb'), D(x,y)],$$
provided that
no prefix $x' \times y'$ of $xa' \times yb'$ has $D(x', y') \not\in I_n$.

We can now prove Theorem~\ref{theorem:L_n,c-sc}.
\begin{proof}[Proof of Theorem \ref{theorem:L_n,c-sc}]
The automaton $M_{n,c}$ we have constructed consists of
states of the form
$[a',b',d, d']$ where $d, d' \in I_n$ and
$a', b' \in \Sigma_2$.  This gives a total
of $2\cdot2\cdot3n \cdot 3n$ states,
which is $O(n^2)$.
\end{proof}

\begin{remark}
We point out that the DFA $M_{n,c}$ in our construction is not, in general,
minimal.  However, we can construct our automaton in quadratic time and then apply a well-known minimization algorithm for
DFAs by Hopcroft \cite{hopcroft_introduction_1979} or Valmari \cite{valmari_fast_2012} to get the minimal automaton
recognizing $Y=nX+c$ in $O(n^2 \log n)$ time.

The number of states in the minimal DFA recognizing the language
$L_{n,0}$ forms sequence \seqnum{A372846} in the
On-Line Encyclopedia of Integer Sequences (OEIS).   The values
for $1 \leq n \leq 10$ (not counting the dead state), as
computed by the free software {\tt Walnut}, are
$2, 10, 23, 40, 59, 85, 114, 146, 181, 224 $.
Numerical evidence suggests that this quantity is roughly
$2n^2 + \alpha^2 n$, but we do not know how to prove this.  A better understanding of the reachable states of $M_{n,c}$ might help reduce the bound.
\end{remark}

\section{Linear Subsequences}
\label{section:lin-subseq}

A sequence $(h(i))_{i \geq 0}$ over a finite alphabet is said to be {\it Fibonacci-automatic} if it is generated 
by a DFAO $M_h$ reading an integer input $i$ in Fibonacci representation,
and returning the value of $h(i)$ as the output associated with the last
state reached.  More precisely, if $M_h = (Q_h, \Sigma_2, \Delta_h, q_{0,h}, \delta_h, \tau_h)$, then $h(i) = \tau_h(\delta_h(q_{0,h}, x))$ for all
 valid representations $x \in \Sigma_2^*$ with $[x] = i$. We assume that there are no transitions corresponding to
two consecutive inputs of $1$; in other words, if there is a state $q$ reachable on $1$ from some other state, then $\delta_h(q,1)$ is undefined. For more information about such sequences, see \cite{mousavi_decision_2016}.

\begin{theorem} \label{theorem:sc-h(i+c)}
    Let $(h(i))_{i \geq 0}$ be a Fibonacci-automatic sequence generated by a DFAO of $m$ states. For a constant $c \geq 0$, there is a DFAO of $O(m^2c^2)$ states generating $(h(i+c))_{i \geq 0}$.
\end{theorem}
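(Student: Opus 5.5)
We need $O(m^2c^2)$ states. The plan is to combine the adder automaton of Theorem~\ref{theorem:[x]+c=[y]} with $M_h$ in a product construction. On input $x$, where $[x]=i$, we guess a word $y$ in parallel, symbol by symbol, simulate $M_h$ on $y$, and simulate $M_c$ on $x\times y$. This gives an NFA whose states are pairs $(q,p)$ with $q\in Q_h$ and $p$ a state of $M_c$. The output on $x$ is $\tau_h(q)$ at the unique accepting run, namely the one with $[y]=[x]+c$.

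Leading zeros need care. The representation of $i+c$ may be longer than that of $i$, so we pad $x$ with enough leading zeros; by the convention of Section~\ref{subsection:fa} and the self-loops on the initial states this is harmless. Then $y$ is determined by $x$: it is the valid representation of $[x]+c$ of the padded length. So the NFA is a UFAO and outputs $h([x]+c)$.

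We must avoid the exponential blow-up of the subset construction, so we bound the reachable subsets directly. After reading a prefix $x$, the reachable pairs are $(\delta_h(q_{0,h},y),\,\delta_c(x\times y))$ over valid $y$ of length $|x|$ whose $M_c$-state is still alive. Alive means the difference $d=[y]-[x]$ lies in $I=[0,c]$, with the previous difference $d'$ also in range. Such a $y$ is determined by the pair $(d,\text{last bits})$, because $[y]=[x]+d$ and $y$ is then the unique valid word of that length representing this value. So each reachable subset contains $O(c)$ pairs. Moreover, the whole subset is determined by the data $\bigl(\delta_h(q_{0,h}, y_d)\bigr)_{d\in I}$ together with the last bit of $x$, where $y_d$ denotes the representation of $[x]+d$. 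In other words, the subset is determined by the length-$(c+1)$ window of the interior sequence $h'$ (or its "word-level" analogue) starting at position $[x]$.

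This alone gives a bound of roughly $m^{c+1}$, which is too weak. Here I would invoke Lemma~\ref{lemma:subword-sc}. The reachable subsets correspond to factors of length $c+1$ of the sequence of states of $M_h$, read as a Fibonacci-automatic sequence over alphabet $Q_h$. That sequence is generated by an $m$-state DFAO, so there are only $O(cm^2)$ such factors. We need one extra factor of $c$ for the $d'$ component, the last-bit bookkeeping, and the fact that prefixes $x$ with leading zeros or non-canonical padding give states indexed by windows of slightly varying shape. With that, the number of distinct subsets, and hence states of the determinized DFAO, is $O(m^2c^2)$.

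The main obstacle is justifying that the reached subset depends only on such a window. Prefixes $x$ are not arbitrary integers, and $y_d$ is taken of length $|x|$, which may force leading zeros or make $[x]+d$ unrepresentable at that length. So I would first show that, for prefixes of a fixed length, the state $\delta_h(q_{0,h},y)$ for the length-$|x|$ representation $y$ of $j$ equals $h'(j)$. This holds by the leading-zero convention; for $j\ge F_{|x|+2}$ the entry is simply absent. The subset is then a function of the factor $h'([x])\cdots h'([x]+c)$, the truncation point, and $O(1)$ bits. The truncation point contributes the extra $O(c)$ factor, which gives the claimed total $O(m^2c^2)$.
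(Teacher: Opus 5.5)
Your route is a product of the adder $M_c$ with $M_h$, followed by the subset construction and a subword-complexity count. The gap is in the step you yourself call the main obstacle: you show the $M_h$-half of each pair is $h'([x]+d)$, but you never determine the $M_c$-half, and it is not ``$O(1)$ bits''. For the word $y_d$ with $[y_d]=[x]+d$, the $M_c$-state is $[a,b_d,d'_d,d]$. Here $b_d={\bf f}[[x]+d]$ and $d'_d=[y_d^-]-[x^-]=g([x]+d)-g([x])$, where $w^-$ is $w$ with its last letter deleted and $g$ is the drop-the-last-bit map of Lemma~\ref{lemma:[x]-floor}. Two consecutive integers have the same parent exactly when the larger one ends in $1$, so $d'_d=\sum_{j=1}^{d}(1-{\bf f}[[x]+j])$. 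The subset therefore depends on the whole factor ${\bf f}[[x]]\cdots{\bf f}[[x]+c]$ of the Fibonacci word, and nothing you say recovers it from the $h'$-window plus the last bit of $x$. Your accounting is also inconsistent. One paragraph spends the extra factor $c$ on $d'$, the last bits and the window shapes together; the last paragraph spends it on the truncation point alone. Charging the ${\bf f}$-window and the truncation point a separate factor $c$ each, on top of the $O(m^2c)$ $h'$-windows, gives only $O(m^2c^3)$.

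To close it, count length-$(c+1)$ factors of the pair sequence $(h'(j),{\bf f}[j])_{j\ge0}$. It is generated by a $2m$-state DFAO (the product of $M_h$ with the automaton of Figure~\ref{fig:fib-word-automaton}), so Lemma~\ref{lemma:subword-sc} still gives $O(m^2c)$ factors. Alternatively, under the convention of Section~\ref{section:lin-subseq}, a state entered on $1$ has no $1$-transition, while every state entered on $0$ must have one; so ${\bf f}[j]$ can be read off $h'(j)$. This is how the paper handles the same bookkeeping in the proof of Theorem~\ref{theorem:lin-subseq-sc}, by counting the triples $({\bf f}[j],g(j+1)-g(j),h'(j))$. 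That is really the technique you are using, with $M_{n,c}$ replaced by the adder.

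Separately, a DFAO must read $(i)$ itself, so ``padding $x$'' has to be built in. Take as initial state the subset reached after $0^k$ for large $k$, i.e. all $y$ with $[y]\le c$, as with $q_{0,u}$. This subset has a $0$-loop. Once you use it, every reached subset comes from a long zero-padded input, so truncation never occurs and you get $O(m^2c)$ subsets, inside the claimed bound.

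The paper's own proof avoids the adder and fixed-length guesses entirely. Its states are the windows $[[h'([x]),{\bf f}[[x]]],\ldots,[h'([x]+c),{\bf f}[[x]+c]]]$, updated by expanding each entry into its children. Its second factor of $c$ is exactly the ${\bf f}$-window that your argument leaves unaccounted for.
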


\begin{proof}
    \sloppy
    Let $(h'(i))_{i \geq 0}$ be the interior sequence of $(h(i))_{i \geq 0}$ generated by $M=(Q, \Sigma_2, q_0, \delta, \Delta, \tau)$.
    We use a method similar to the one used for base-$k$ in \cite{moradi_complexity_2026} and create an automaton $M'$ that on input $x$ reaches a state of the form $
    [[h'([x]), a'_0], \ldots, [h'([x]+c), a'_c]]
    $ where $a'_i$ is the last letter in $([x]+i)$ and after reading the letter $a$ transitions to the next state 
    $[[h'([xa]), a_0], \ldots, [h'([xa]+c), a_c]]$.
    
    The automaton $M'=(Q', \Sigma_2, q'_0, \delta', \Delta, \tau')$ is constructed as follows.
    The states in $Q'$ are of the form $[[p_0, a_0], \ldots, [p_c, a_c]]$ where $p_i \in Q$ and $a_i \in \Sigma_2$.
    The initial state $q'_0$ is $[[h'(0),0], \ldots, [h'(c), a_c]]$ where $a_c$ is the last letter of $(c)$.
    Furthermore, we have $\tau'([[p_0, a_0], \ldots, [p_c, a_c]]) = \tau(p_c)$.
    
    The transition function $\delta'$ is more complicated and requires explanation.
    If the input to $\delta'$ are $[[p_0, a_0], \ldots, [p_c, a_c]]$ and $a$, it computes the output as follows. First, it creates a list. Then it goes through each $[p_i, a_i]$ in order; if $a_i=0$, then $[\delta(p_i, 0), 0]$ and $[\delta(p_i, 1), 1]$ are added to the list; otherwise only $[\delta(p_i, 0), 0]$ is added to the list. Finally, if $a=0$, the function outputs the $0$-th to $c$-th element; otherwise it outputs the $1$-st to $(c+1)$-th element. 
    
    Now let us find the number of states in $M'$. Since $p_0, \ldots, p_c$ form a length-$(c+1)$ subsequence of $(h'(i))_{i \geq 0}$, there are $O(m^2c)$ possibilities. Furthermore, the $a_0, \ldots, a_c$ form a length-$(c+1)$ subsequence of the infinite Fibonacci word and there are $O(c)$ possibilities for them.
    So the automaton $M'$ for $(h(i+c))_{i \geq 0}$ has $O(m^2c^2)$ states.
\end{proof}

\begin{theorem}
Let $(t(i))_{i \geq 0}$ be the Fibonacci-Thue-Morse sequence; in other words, $t(i)$ is the number of $1$s in the Fibonacci representation of $i$ modulo $2$. For a constant $c \geq 0$, the number of states in the minimal DFAO generating $(t(i+c))_{i \geq 0}$ is $O(c)$.
\end{theorem}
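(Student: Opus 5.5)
We build a DFAO directly: after reading a valid Fibonacci representation $x$, its state records $t([x]+c)-t([x]) \bmod 2$ and the relevant local data about $[x]+c$, and the output is recovered from $t([x])$. The natural tool is the addition automaton $M_c$ of Theorem~\ref{theorem:[x]+c=[y]}, which has $O(\log c)$ co-accessible states. Its states are $[a,b,d',d]$ with $d=[y]-[x]$ and $d'$ the previous difference. Reading $x \times y$ with $[y]=[x]+c$, it tracks how the prefix difference evolves. We augment each state with one parity bit: the number of $1$s in the prefix of $x$ read so far, plus the number of $1$s in the prefix of $y$, mod $2$.

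The difficulty is that our DFAO reads only $x$, not $y$, so we cannot simply simulate $M_c$. We therefore use the idea from the proof of Theorem~\ref{theorem:sc-h(i+c)}. After reading a prefix $x$ of the input, we maintain the set of all pairs (state of $M_c$ reached on $x \times y'$, accumulated parity bit) over all $y'$ of length $|x|$ that are prefixes of some valid $y$ with $[y]=[x\,z]+c$ for some extension $z$. In fact $[y']-[x]\in[0,c]$ is forced along any accepting run, by the choice of $I$ in Theorem~\ref{theorem:[x]+c=[y]}.

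The crucial claim is that this set has bounded size, or at least bounded structure. For a given $x$, the candidate prefixes $y'$ are exactly those with $[y']\in\{[x],\dots,[x]+c\}$ whose difference can still reach $c$. By the level analysis in Theorem~\ref{theorem:[x]+c=[y]}, the difference at a given depth below the final position lies in a window of at most $9$ values. So only $O(1)$ pairs $(d',d)$ survive, each with one parity bit and the last letters.

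The number of distinct DFAO states is then bounded as follows. We need only the states of $M_c$ that are co-accessible within the remaining length. Each "level" of $M_c$ contributes $O(1)$ configurations. Summing over all reachable configurations gives $O(\log c)$ from the levels near the end, plus the early phase in which $[x]$ is small. In that phase the states simply record $[x]$ itself, for $[x]$ up to about $c$, which contributes $O(c)$ states. Overall we get $O(c)$, and minimization can only reduce this.

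I expect the main obstacle to be the determinization step: justifying that the subset construction over $M_c$, viewed as an unambiguous automaton in $x$ with $y$ guessed, yields only $O(c)$ subsets rather than an exponential number. The UFAO-to-DFAO remark in Section~\ref{subsection:fa} warns that this blow-up can happen in general.

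My fallback plan is as follows. Write $t(i+c)=t(i)\oplus g(i)$, where $g(i)$ depends only on the carry pattern when adding $c$. Argue that $g(i)$ is determined by $h'(i)$ together with the length-$(c+1)$ window of the Fibonacci word starting at $i$ and the suffix of $(i)$ of length $O(\log c)$, since adding $c$ affects only the low-order $O(\log c)$ digits up to a carry run. Then $t(i+c)$ is determined by $t(i)$ and the last $O(\log c)$ digits of $(i)$ together with one carry-propagation bit. The number of valid Fibonacci suffixes of length $L=O(\log c)$ is $F_{L+2}=O(\alpha^{L})=O(c)$. Hence a DFAO that remembers the last $L$ digits, a parity bit, and a bounded carry flag has $O(c)$ states.
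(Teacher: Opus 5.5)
Your main plan has a genuine gap. The claim that only $O(1)$ pairs $(d',d)$ survive is false. The level analysis in Theorem~\ref{theorem:[x]+c=[y]} bounds the difference at a \emph{fixed} distance from the end of the input. A DFAO reading $x$ msd-first does not know how many letters remain, so its subset must keep every difference that can still reach $c$ for \emph{some} remaining length. That gives roughly one cluster per possible remaining length, i.e.\ about $\log_\alpha c$ values rather than $O(1)$.

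More importantly, what must be bounded is the number of \emph{distinct} subsets, not their size. Each subset carries parity bits $t([x]+d)$ that vary with $x$. Your count ($O(\log c)$ ``from the levels'' plus $O(c)$ from an early phase recording $[x]$) conflates states of $M_c$ with states of the determinized machine, and it says nothing about long inputs. You flag this determinization step yourself but do not resolve it.

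The missing idea is the paper's. In the construction of Theorem~\ref{theorem:sc-h(i+c)}, the state reached on $x$ is $[[t'([x]),a'_0],\ldots,[t'([x]+c),a'_c]]$. For the Fibonacci--Thue--Morse DFAO, the interior state $t'(j)$ already determines the last letter of $(j)$, since the DFAO must remember it to forbid $11$. So each state is just a length-$(c+1)$ factor of $t'$, which is generated by an $O(1)$-state DFAO, and Lemma~\ref{lemma:subword-sc} gives $O(c)$ such factors. Up to the handling of leading zeros, your subsets are functions of the same window $(t'([x]+j))_{0\le j\le c}$, so this count would rescue your plan as well.

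Your fallback is a genuinely different, more elementary route, and it can be completed, but as written its crux is only asserted. Carries run arbitrarily far: $[(10)^k]+1=[10^{2k}]$ changes the number of $1$s by $1-k$. So you must show that a bounded amount of information captures the effect of carries on the parity, and say how that information is maintained msd-first. One way is as follows.

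Write $(i)=uv$ with $|v|=L$, where $L$ is least with $F_{L+1}\ge c$. Then $L=\log_\alpha c+O(1)$; you need this constant, not merely $L=O(\log c)$, to get $F_{L+2}=O(c)$.

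\emph{First}, for $w=e_t\cdots e_2$ we have $[w0^L]=F_{L+1}[w]+F_L\sum_i e_iF_{i-1}$. From this one gets $[u'v']-[uv]\ge F_{L+1}+1$ whenever $[u']\ge[u]+2$. Hence the part of $(i+c)$ above the last $L$ positions represents $[u]$ or $[u]+1$. Which case occurs, and the bottom $L$ digits, are both determined by $v$, $c$, and the last digit of $u$.

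\emph{Second}, incrementing a valid Fibonacci word $w$ changes its number of $1$s by $1-k$, where $k$ is the number of $1$s after the last occurrence of $00$ in $00w$.

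\emph{Third}, the window can slide cheaply. The value $k \bmod 2$ for the part above the window, the last digit of that part, and the parity of its $1$s are all updated trivially: a $1$ leaving the window increments $k$, and a $0$ leaving right after a $0$ resets $k$ to $0$.

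This yields a DFAO with $O(F_{L+2})=O(c)$ states without using Lemma~\ref{lemma:subword-sc}. The paper's argument is shorter because it reuses the general shift construction together with the linear subword complexity of $t'$. The reference to $h'(i)$ and to a length-$(c+1)$ window of $\mathbf{f}$ in your fallback plays no role and should be dropped.
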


\begin{proof}
    Let us create the automaton $M_c$ for $(t(i+c))_{i \geq 0}$ based on the construction provided in Theorem \ref{theorem:sc-h(i+c)}.
    On input $x$, the automaton $M_c$ reaches a state of the form $[ [t'([x]), a'_0], \ldots,  [t'([x]+c), a'_c]$ where $a'_j$ is the last letter of $([x]+j)$.
    Consider the automaton generating $(t(i))_{i \geq 0}$ from  \cite{shallit_subword_2021}. We can see that $t'([x]+j)$ determines $a'_j$. So from Lemma \ref{lemma:subword-sc}, we conclude that there are $O(c)$ states in $M_c$. 
\end{proof}

We suspect that the number of states in a minimal automaton generating $(t(i+c))_{i\geq 0}$ has a linear lower bound too. The following is left as an open problem.

\begin{problem}
    Prove the number of states in a minimal automaton generating $(t(i+c))_{i\geq 0}$ is $\Theta(c)$. 
\end{problem}

In a forthcoming paper \cite{moradi_state_2026}, we study the state complexity of generating $(f(i+c))_{i\geq 0}$ where $(f(i))_{i \geq 0}$ is the the Fibonacci word {\bf f} and we prove it is $O(\log c)$.

\begin{theorem}
Let $n \geq 1$ and $0 \leq c < n$, and let $(h(i))_{i \geq 0}$ be a Fibonacci-automatic sequence generated by a DFAO of $m$ states.  There is a DFAO of
$O(m^2n^4)$ states recognizing the linear subsequence
$(h(ni+c))_{i \geq 0}$.
\label{theorem:lin-subseq-sc}
\end{theorem}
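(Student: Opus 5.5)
The plan is to combine the automaton $M_{n,c}$ of Theorem~\ref{theorem:L_n,c-sc} with the DFAO for $h$, project away the second input to obtain an unambiguous automaton with output, and then bound the size of the subset construction directly, exactly as in the proof of Theorem~\ref{theorem:sc-h(i+c)}.

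\emph{Step 1 (the UFAO).} Let $M=(Q,\Sigma_2,q_0,\delta,\Delta,\tau)$ be the minimal DFAO for $h$, with interior sequence $h'$. Form the product of $M_{n,c}$ with $M$, where $M$ reads only the second track $y$. Its states are $[a',b',d,d',q]$ with $q\in Q$. Now project onto the first track $x$, so that $y$ is guessed nondeterministically. The output of a state is $\tau(q)$ when $d=c$.

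Because $[y]=n[x]+c$ determines $y$ up to leading zeros, at most one accepting path exists on each input. This gives a UFAO with input $x$ that outputs $h(n[x]+c)$. Let $M'$ be the DFAO obtained from it by the subset construction. It remains to count the reachable subsets.

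\emph{Step 2 (describing a reachable subset).} Fix an input $x=x_0a'$ and let $N=n[x]$. By the invariant following Theorem~\ref{theorem:I_n}, the subset reached contains precisely the tuples coming from those valid $y$ of length $|x|$ with $d=[y]-N\in I_n=[-n,2n-1]$. There is one such $y$ for each $d$, namely the representation of $N+d$ (padded with leading zeros), unless it is dropped because some prefix left $I_n$. For this $y$:
\begin{itemize}
\item $q=h'(N+d)$;
\item $b'$ is the last letter of $(N+d)$;
\item by Lemma~\ref{lemma:[x]-floor},
\[
d'=\Bigl\lfloor \tfrac{N+d+2}{\alpha}\Bigr\rfloor - 1 - n\Bigl(\Bigl\lfloor \tfrac{[x]+2}{\alpha}\Bigr\rfloor -1\Bigr).
\]
\end{itemize}
So the subset is determined by five pieces of data:
\begin{enumerate}[label=(\roman*)]
\item the factor $h'(N-n)\cdots h'(N+2n-1)$ of length $3n$ of the interior sequence;
\item the corresponding length-$3n$ factor of the Fibonacci word $\bf f$, which gives the letters $b'$;
\item the value of $d'$ at $d=-n$, which lies in $I_n$ (or records that it has left $I_n$);
\item the increment pattern of $d\mapsto\lfloor (N+d+2)/\alpha\rfloor$ over the window;
\item the bit $a'$.
\end{enumerate}
Membership in the subset, that is, whether a $y$ was killed earlier, is a function of these data together with $d'$ staying in $I_n$. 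If needed, I would check this by noting that leaving $I_n$ is permanent by Lemmas~\ref{lemma:I_n-lower}--\ref{lemma:I_n-upper-n}. Alternatively, I would simply keep the dead-marked tuples in the subset, which does not affect outputs.

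\emph{Step 3 (counting).} I count each piece of data separately:
\begin{itemize}
\item By Lemma~\ref{lemma:subword-sc}, there are $O(m^2n)$ choices for (i).
\item Applying the same lemma to the $2$-state DFAO for $\bf f$, there are $O(n)$ choices for (ii).
\item There are $O(n)$ choices for (iii).
\item The sequence of increments in (iv) is a factor of length $3n$ of a Sturmian (mechanical) word of slope $1/\alpha$. Sturmian words have complexity $k+1$, so there are $O(n)$ choices.
\end{itemize}
Multiplying gives $O(m^2n)\cdot O(n)\cdot O(n)\cdot O(n)\cdot 2=O(m^2n^4)$ states, as claimed.

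\emph{Main obstacle.} I expect the hardest step to be justifying rigorously that the reachable subset is a function of exactly the data (i)--(v). The delicate point is the $d'$ coordinate, which depends on $[x]$ as well as on $N+d$, together with the bookkeeping of which $y$ have been discarded. I would first try to express everything through Lemma~\ref{lemma:[x]-floor} as above. If the increment pattern (iv) turns out to be redundant with (ii), since both come from a Sturmian rotation by $1/\alpha$, the bound only improves.
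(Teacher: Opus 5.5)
Your outline matches the paper's skeleton: the UFAO built from $M_{n,c}$ and $M_h$, the subset construction, and a count of reachable subsets via Lemma~\ref{lemma:subword-sc}. However, the step you flagged is not justified by the lemmas you cite.

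\textbf{The gap.} Lemmas~\ref{lemma:I_n-upper-2n} and~\ref{lemma:I_n-upper-n} only say that once $D\ge 2n$, the difference stays $\ge n$. That suffices for rejection, since $c<n$. It does not exclude a later return to $[n,2n-1]$. So from those lemmas alone you cannot conclude that a tuple whose final $d$ (or $d'$) lies in $I_n$ was never killed, and ``membership is a function of the data'' is not yet proved. The fallback of keeping dead-marked tuples does not help: a killed path's difference is no longer tracked, and tracking it would be unbounded.

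\textbf{The fix.} Exit through the top is in fact permanent, but you need a sharper estimate.
\begin{itemize}
\item If $a=0$, the estimate in the proof of Lemma~\ref{lemma:I_n-upper-2n}, without the $-na$ term, gives $D(xa,yb)>\alpha^2n-\beta^2>2n-1$.
\item If $a=1$, then $[x]=0$ or $x=w0$. Lemma~\ref{lemma:[x0]} then gives $[x0]-\alpha[x]=-([w0]-\alpha[w])/\alpha<\beta^2/\alpha=\sqrt5-2$, whence
\[
D(x1,yb)\ \ge\ [y0]-n[x0]-n\ >\ \alpha D(x,y)-\beta^2-(\sqrt5-1)n\ \ge\ 2n-\beta^2 .
\]
\end{itemize}
So $D\ge 2n$ persists. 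Together with Lemma~\ref{lemma:I_n-lower}, a tuple survives iff its final $d$ lies in $I_n$. Hence for $[x]\ge1$ the subset always contains all $3n$ tuples with $d\in[-n,2n-1]$; inputs with $[x]=0$ add only $O(1)$ subsets. Your data (i)--(v) then determine the subset, and your count of $O(m^2n^4)$ goes through.

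\textbf{Comparison with the paper.} The paper does not prove permanence. Its Lemma~\ref{lemma:lin-subseq-consec} only shows that the surviving $d$ form some interval $[S,S+T]$, and the proof pays $O(n)$ each for $d_0$ and $T$. In exchange, it treats $({\bf f}[j],g(j+1)-g(j),h'(j))_{j\ge0}$ as a single Fibonacci-automatic sequence with $O(m)$ states, so the whole factor costs only $O(m^2n)$. You instead count the $h'$-factor, the ${\bf f}$-factor and the Sturmian increment factor separately, at $O(m^2n)\cdot O(n)\cdot O(n)$.

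The two arguments lose their extra two factors of $n$ in different places, and both land at $O(m^2n^4)$. If you combine your fixed window (now justified) with the paper's joint count, the subset is determined by one length-$3n$ factor of the triple sequence, the value of $d'$ at $d=-n$, and $a'$. That gives $O(m^2n^2)$ states, which is consistent with the roughly quadratic values reported for ${\bf f}[ni]$.
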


Let $(h(n))_{n \geq 0}$ be an arbitrary Fibonacci-automatic sequence computed by
a DFAO 
$$M_h = (Q_h, \Sigma_2, \Delta, \delta_h, q_{0,h}, \tau_h),$$
where $\Delta$ is a finite alphabet and $\tau_h$ is the output function
associated with $M_h$.  Fix $n$ and $c$. We now form a
UFAO $M_u = (Q_u, \Sigma_2, \Delta, \delta_u, q_{0,u}, A_u, \tau_u )$ as follows:
\begin{align*}
Q &= \{ [a',b',d,d',q'] \suchthat a', b' \in \Sigma_2
\text{ and } d, d' \in I_n \text{ and } q' \in Q_h \}, \\
\delta([a',b',d,d',q'], a) &=
\{ [a,b,g, d,q] \suchthat b \in \Sigma_2,\
 g = d + d' + b'-na' + b-na,\ d \in I_n,\\
 &\quad\quad\quad \text{$aa', bb' \not= 11$, and } q = \delta_h(q', b) \}, \\
 q_{0,u} &= \{ [0,b',0,i,p] \suchthat 0 \leq i \leq 2n-1,\ 
    b '= {\bf f}[i], \ p = \delta_h(q_{0,h}, (i) ) \}, \\
    A_u &= \{ [a',b',d,d',q'] \suchthat d=c \}, \\
\tau_u ([a',b',d, d', q']) &= \tau_h(q').
\end{align*}
Conceptually, $M_u$ is formed by taking a Cartesian product of $M_{n,c}$
with $M_h$ that takes an input $x$, 
``guesses'' $y$, and simulates $M_{n,c}$
on $x \times y$ and $M_h$ on $y$.  The output of the UFAO is the unique output
of $M_h$ on $y$ corresponding to $d = c$.   

The initial state $q_{0,u}$ requires some
explanation.   It corresponds to all inputs $x$ such that
$[x]=0$ (in other words, $x = 0^j$ for some $j$) and all possible $y$ such that $[y]-n[x] \in I_n$.   This is required because
we want the DFAO we construct to return the correct result no matter how many leading zeros are prefixed to the input.

Next, we use the subset construction on $M_u$, obtaining a DFAO $M$.
On input $x$ this DFAO computes $h(n[x] + c)$.  
The states of $M$ are subsets of states of $M_u$.  We first prove a
lemma that characterizes the subsets of states that can appear in
a state of $M$.

\begin{lemma}
For each state $p$ in $M$
there exist $a \in \Sigma_2$ and integers $S, T$ such 
that $[S, S+T] \subseteq [-n,2n-1]$ and every state of $p$ is of the form
$[a, b_i, d_i, d'_i, s_i]$ 
with $0 \leq i \leq T$ and $d_i = S+i$.
\label{lemma:lin-subseq-consec}
\end{lemma}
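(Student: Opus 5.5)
The plan is to read off a combinatorial meaning for a subset-state $p$ of $M$, and then use the order-preserving property of Fibonacci representation, Lemma~\ref{lemma:[x0]}~(c), to show that the $d$-values in $p$ form an interval. The initial state $q_{0,u}$ is built from the inputs $0^j$ and is handled directly from its definition, so I would only treat states reached after reading a nonempty input $x$. By an induction on $|x|$ of the same kind as the one given for $M_{n,c}$, the state of $M$ reached on $x$ should be the set of tuples $[a,b,D(x,y),D(x^-,y^-),\delta_h(q_{0,h},y)]$, one for each $y$ that satisfies three conditions. First, $y$ is a valid Fibonacci word, padded with leading zeros to length $|x|$ (the $i$-range in $q_{0,u}$ is what permits the padding). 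Second, $b$ is the last letter of $y$. Third, every prefix pair $x'\times y'$ has $D(x',y')\in I_n$. Here $x^-,y^-$ denote $x,y$ with the last letter removed.

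The first claim is easy. The first coordinate of every tuple is the last letter $a$ of $x$, because $\delta_u$ always writes the input letter there; so $a$ is common to all states of $p$. Next, distinct admissible $y$ of the same length are distinct valid words, so by Lemma~\ref{lemma:[x0]}~(c) they have distinct values $[y]$. Since $x$ is fixed, they therefore have distinct values $d=[y]-n[x]$, and each $d$ occurs in at most one state of $p$.

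The main step is to show that the set of occurring $d$-values has no gaps. Suppose $y_1$ and $y_2$ are admissible with $d_1<d_2$, and let $e$ be an integer with $d_1<e<d_2$. Let $y$ be the valid representation of $n[x]+e$, padded to length $|x|$; this is possible because $[y_1]<n[x]+e<[y_2]$. By Lemma~\ref{lemma:[x0]}~(c), $y_1<y<y_2$ lexicographically, and then for every $k$ the length-$k$ prefixes satisfy $y_1[1..k]\le y[1..k]\le y_2[1..k]$. Applying Lemma~\ref{lemma:[x0]}~(c) again (leading zeros do not change values) gives $[y_1[1..k]]\le[y[1..k]]\le[y_2[1..k]]$. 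Subtracting $n[x[1..k]]$ shows that each prefix difference for $y$ lies between the corresponding prefix differences for $y_1$ and $y_2$, both of which lie in $I_n=[-n,2n-1]$. Hence $y$ is admissible. Taking $S$ as the least occurring $d$ and $S+T$ as the largest, the occurring values are exactly $S,\dots,S+T\subseteq I_n$, and we label the states so that $d_i=S+i$.

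The obstacle I expect is the bookkeeping needed to justify the characterization of the states of $M$: matching the slot conventions of $q_{0,u}$ and $\delta_u$ (which coordinate holds the current $d$ and which holds the previous one), and checking that leading-zero padding of $y$ beyond the length of $x$ is exactly what $q_{0,u}$ accounts for. Once that characterization is fixed, the interval argument is the short lexicographic sandwich above. That argument avoids appealing to Lemmas~\ref{lemma:I_n-lower}--\ref{lemma:I_n-upper-n}, since those guarantee only $D\ge n$ after an excursion above $2n-1$, which would not by itself rule out gaps.
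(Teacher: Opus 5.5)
Your proposal is correct and is essentially the paper's argument. The paper simply asserts that the permissible $[yb]$ form an intersection of intervals, and your prefix-wise lexicographic sandwich via Lemma~\ref{lemma:[x0]}~(c) is exactly the monotonicity that justifies this, covering every earlier prefix constraint and not only the two differences $d_i, d'_i$ that the paper mentions. As you anticipated, the one bookkeeping caveat is that $q_{0,u}$ (read as intended, with the current difference $i$ in the third slot) also admits guesses $y$ longer than $x$, effectively padding $x$ with leading zeros; the same sandwich goes through after padding $x$ and all the $y$'s to a common length.
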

In other words, the third components of the states of $p$ are distinct,
and consist of all integers in the range $[S, S+T]$ exactly once.
\begin{proof}
Consider a state $p$ reachable via the input $xa$.   The subset construction
generates those states of $M_u$ reachable via the input $xa \times yb$ where
$r = |y| = |x|$.  The corresponding integers $[yb]$ are therefore precisely those in the interval $[0,F_{r+2}-1]$.  The corresponding $d_i$ are of the form $[yb]-n[xa]$.  However, not all of these result in valid states
because the $d'_i, d_i$ could fall outside the interval $[-n, 2n-1]$.
Thus the range of permissible $d_i$ is an intersection of intervals, and hence
itself is a subinterval of $[-n,2n-1]$.
\end{proof}

\begin{proof}[Proof of Theorem \ref{theorem:lin-subseq-sc}]
Consider an input $xa$ to $M$.
By Lemma~\ref{lemma:lin-subseq-consec} we know that
on this input $M$ reaches a set of
states corresponding to inputs $xa\times y_i b_i$ in $M_u$, where $d_i = [y_i b_i]-n[xa]=  S+i$  for
$0 \leq i \leq T$, some word $y_i$ and letter $b_i$.  Let $[y_0 b_0] = Y$.  Then $[y_i b_i] = Y+i$
for $0 \leq i \leq T$.  Noting that
$d'_i = [y_i] - n[x]$, 
define $e_i = d'_{i+1} - d'_i$ for $0 \leq i < T$. 
Let $g$ be the function that
maps $t$ to the integer obtained by dropping the last
bit in the Fibonacci representation of $t$; Lemma~\ref{lemma:[x]-floor}
shows that $g(t) = \lfloor (t+2)/\alpha \rfloor - 1$.
Then $d'_i = g(Y+i) -n[x]$.
The triples $(b_i, e_i, s_i )$ are given by $({\bf f}[Y+i], g(Y+i+1)-g(Y+i), h'(Y+i))$
for $0 \leq i \leq T$, and this formula completely specifies {\it all\/} the $b_i$ and $d'_i$ and $s_i$,
once $d'_0$ is chosen.  Namely,
$d'_i = d'_0 + \sum_{0 \leq j < i} e_j$.

On input $x$ the automaton generating $(g(j+1)-g(j))_{j \geq 0}$ only needs to keep track of the last two letters read; if any of them is $1$ then $g([x]) = 1$, otherwise $g([x]) =0$. 
Now think of the triples $(b_i, e_i, s_i)$  as a subsequence of the infinite sequence of triples 
${\bf x} := ({\bf f}[j], g(j+1)-g(j), h'(j))_{j\geq 0}$, which 
is an automatic sequence because $({\bf f}[j])_{j \geq 0}$, $(g(j+1)-g(j))_{j \geq 0}$, and the interior sequence $(h'(j))_{j\geq 0}$ are automatic sequences.
Furthermore, the sequence ${\bf x}$ is generated by a DFAO of $O(m)$ states.
The finite sequence
$({\bf f}[Y+i], g(Y+i+1)-g(Y+i), h'(Y+i))_{0 \leq i \leq T}$ is then completely specified by saying which factor of length $T+1$ of $\bf x$ it is, and by
Lemma~\ref{lemma:subword-sc}, there are $O(m^2n)$ of them.

Summarizing, there are
\begin{itemize}
\item[(a)] $O(n)$ choices for $d_0 \in [-n, 2n-1]$,
\item[(b)] $O(n)$ choices for $d'_0 \in [-n, 2n-1]$,
\item[(c)] $O(n)$ choices for $T$,
\item[(d)] $O(m^2n)$ choices for which particular subword of the automatic sequence $\bf x$ gives $(b_i, d'_{i+1}-d'_i, s_i)$ for $0 \leq i \leq T$,
\item[(e)] $O(1)$ choices for $a$.
\end{itemize}

Therefore there are $O(m^2n^4)$ possible states.
\end{proof}

\begin{example}
    Let us find the minimal DFAO for ${\bf f}[2i]$.  We can use the
    free software {\tt Walnut} for this, using the commands
    \begin{verbatim}
    def f2 "?msd_fib F[2*i]=@1":
    combine F2 f2=1:
    \end{verbatim}
    This gives us the DFAO in Figure~\ref{figure:f[2i]}.
    \begin{figure}[htb]
    \begin{center}
    \includegraphics[width=5in]{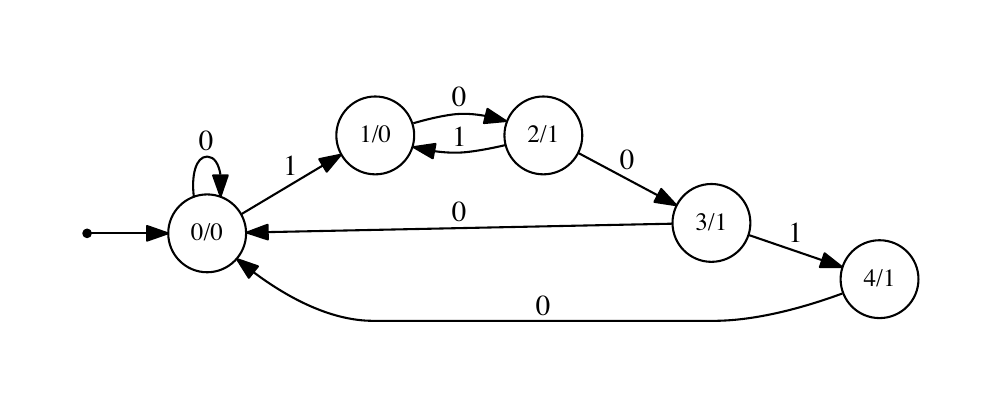}
    \end{center}
    \caption{A Fibonacci DFAO for ${\bf f}[2i]$.}
    \label{figure:f[2i]}
    \end{figure}

    Note that {\tt Walnut}'s output is a  minimized automaton. So Figure \ref{figure:f[2i]}
    is the minimized version of the automaton built by the construction provided in this paper.
     
\end{example}

\begin{remark}
We point out that the automaton we have constructed for
$(h(ni+c))_{i \geq 0}$ is not minimal.  However, we can find
the minimal automaton in polynomial time by a minimization algorithm. The minimization algorithm running in $O(n \log n)$ time by Hopcroft \cite{hopcroft_n_1971} can be adapted for DFAOs as described by van Spaendonck \cite{van_spaendonck_automatic_2020}.

For the special case where $c = 0$ and $h(i) = {\bf f}[i]$, the infinite Fibonacci word, the smallest DFAO for the linear subsequence
${\bf f}[ni]$ has a number of states given by OEIS sequence
\seqnum{A385021}.  The first 10 terms, corresponding to
$1 \leq n \leq 10$, are $2, 5, 10, 17, 27, 36, 52, 65, 78, 103$.
\end{remark}

\subsection{Application to a Problem of Bosma and Don}

As is well-known, a $k$-automatic sequence can also be generated as the image (under a coding) of the fixed point  $\gamma^\omega (0)$ of a prolongable morphism $\gamma$ that maps each letter of some alphabet $\Sigma$ to a word of $\Sigma^*$ of length $k$ \cite{allouche_automatic_2003}.   Here
``prolongable'' means that the image of the letter $0$ begins with $0$.
In fact, the appropriate morphism can be deduced immediately from the automaton, as follows:  the letters of the alphabet are the states $q$ of the automaton; the value $\gamma(q)$ is given by the states reached
on inputs $0, 1, \ldots , k-1$.  The coding is defined by the
output associated with each state.
Roughly the same thing is true for Fibonacci-automatic sequences, except now the values of 
$\gamma(q)$ are given by the outputs on
$0$ and $1$ (if the latter is possible by the
rules of the numeration system).
\begin{example}
 From the automaton in Figure~\ref{figure:f[2i]} we can read off the following representation for the subsequence $({\bf f}[2i])_{i \geq 0}$ in
    terms of a morphism $\gamma$ and coding $\rho$:
    \begin{align*}
    \gamma(0) &= 01,  & \quad \tau(0) &= 0, \\
    \gamma(1) &= 2, & \quad \tau(1) &= 0, \\
    \gamma(2) &= 31, & \quad \tau(2) &= 1, \\
    \gamma(3) &= 04, & \quad \tau(3) &= 1, \\
    \gamma(4) &= 0, & \quad \tau(4) &= 1 .
    \end{align*}
When we iterate the morphism $\gamma$ we get the sequence
$01231040\cdots$ and after applying the coding $\tau$ we get
$00110000\cdots$.
\end{example}

Bosma and Don \cite{bosma_constructing_2024} studied the  linear subsequences of the Fibonacci word {\bf f}.   They were mostly concerned with the size of the smallest morphism generating the sequence $({\bf f}[ni+c])_{i \geq 0}$,
and they were able to prove an exponential upper bound on the size of this morphism.  Here the size of the morphism is the total number of symbols
required to write it down, that is,
$\sum_{a \in \Sigma}  |\gamma(a)| $.

As a consequence of our results, we can reduce the size of the needed morphism from exponential to polynomial.

\begin{corollary}
Let $n \geq 1$ and $0 \leq c < n$, and let
$(h(i))_{i \geq 0}$ be a Fibonacci-automatic sequence.
There is a morphism of size
$O(n^4)$ generating the subsequence $(h(ni+c))_{i \geq 0}$.
\end{corollary}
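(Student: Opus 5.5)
We will derive the corollary from Theorem~\ref{theorem:lin-subseq-sc} together with the standard correspondence between Fibonacci-DFAOs and morphisms described just before the statement. Since $(h(i))_{i\geq 0}$ is fixed, it is generated by some DFAO with a fixed number $m$ of states, so $m$ is a constant. Theorem~\ref{theorem:lin-subseq-sc} then gives a Fibonacci-DFAO $M$ with $O(m^2n^4)=O(n^4)$ states generating $(h(ni+c))_{i\geq 0}$.

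The first step is to put $M$ into a form that can be read as a morphism. We may assume $M$ has a self-loop on $0$ at its initial state, by the conventions of Section~\ref{subsection:fa}; the subset construction preserves this because $M_u$'s initial set is closed under reading leading zeros. We also need $M$ never to read $11$. For this we refine each state by the last letter read, which at most doubles the number of states. In the refined automaton, a state entered on $1$ has only a $0$-transition.

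Next we define the morphism $\gamma$ on the alphabet $Q_M$ of (refined) states. We set $\gamma(q)=\delta(q,0)\,\delta(q,1)$ if $q$ was not entered on $1$, and $\gamma(q)=\delta(q,0)$ otherwise. The coding is the output map $\tau$ of $M$. The key fact to check is the standard one for Fibonacci numeration: listing the valid representations in lexicographic (equivalently, by Lemma~\ref{lemma:[x0]}(c), numerical) order, the children of the length-$r$ words, obtained by appending $0$, or $0$ and $1$ when allowed, are exactly the length-$(r+1)$ words in order. We argue this by induction on $r$. Using the self-loop $\delta(q_0,0)=q_0$, we get that $\gamma$ is prolongable on $q_0$, and that $\gamma^{r}(q_0)$ lists $\delta(q_0,w)$ over the valid words $w$ of length $r$, in order. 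Hence the fixed point $\gamma^\omega(q_0)$ is the interior sequence of $M$, and applying $\tau$ gives $(h(ni+c))_{i\geq0}$.

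Finally, the size of the morphism is $\sum_q|\gamma(q)|\le 2|Q_M|=O(n^4)$, since every image has length $1$ or $2$. The only step needing care is this order-preservation argument, with the handling of the state's last letter and of leading zeros; everything else is bookkeeping. Any partial (dead) transitions are harmless, because by construction $M$ has every valid transition defined.
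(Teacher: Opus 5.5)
Your proposal is correct and follows the same route as the paper: apply Theorem~\ref{theorem:lin-subseq-sc} with $m$ treated as a constant (since $h$ is fixed), then read the morphism off the resulting DFAO, whose size is at most twice its number of states. The paper's proof is only a one-line appeal to this DFAO--morphism correspondence. Your additional checks (the self-loop at the initial state, refining states by the last letter read, and the lexicographic ordering of valid words of each length) supply details the paper leaves implicit.
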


\begin{proof}
As we saw in Theorem~\ref{theorem:lin-subseq-sc}, the resulting
automaton has $O(m^2n^4)$ states. Hence the corresponding morphism
has size $O(m^2n^4)$.
\end{proof}

\section{Construction by B\"uchi Arithmetic} \label{section:Buchi}

Similar to what had been previously discussed for the case of base-$k$ \cite{moradi_complexity_2026}, given an automatic sequence, we can compute the automaton for a linear subsequence using the free software {\tt Walnut} 
\cite{mousavi_automatic_2021,shallit_logical_2023}.  {\tt Walnut} uses an interpretation of B\"uchi arithmetic (the logical theory of the natural numbers together with addition and  the function $V_k(n)$, the largest power of $k$ dividing $n$, for some fixed $k \geq 2$) and implements a 
decision procedure due to B\"uchi \cite{buchi_weak_1960,bruyere_logic_1994} and takes a first-order logical statement and translates it into the
corresponding automaton. The users of such software are interested in knowing how long it takes to algorithmically create the corresponding automata; in this section we study this time complexity. We use the total number of transitions of an automaton $M$ as a proxy for the computational complexity of constructing $M$. 

In particular, in its current version, the algorithms for creating the automata we discussed in previous sections are not used.  Instead, more general techniques, suitable for any addable numeration system, are used.  This means that the size of intermediate automata may differ from the minimal automata we constructed, and the running time can similarly be larger. Sometimes the complexity bounds in this section include the addition of $O(\log i)$ or $O(i^j)$ components for some constant $i$; these addition components are not significant compared to multiplication in $O(\log i)$ or $O(i^j)$ components for some constant $i$.

Furthermore, the minimization steps in the implementation for an $n$-state automaton can use an algorithm running in $O(n \log n)$ time  by Hopcroft \cite{hopcroft_n_1971} or Valmari \cite{valmari_fast_2012}. The algorithm by Hopcroft can be slightly modified to be applied to DFAOs as described by van Spaendonck \cite{van_spaendonck_automatic_2020}. Note that in an $n$-state DFA, the number of transitions is $O(n)$.  
The minimization is applied to each DFA or DFAO created. In this paper, we sometimes only apply partial minimizations and ignore the potential effects of fully minimizing automata in order to be able to analyze the runtime complexity and obtain an upper bound; the upper bound on state complexity and runtime complexity will remain correct regardless and the runtime complexity obtained reflects the computational cost of a complete minimization.

Consider the automaton $M_{\add}$ from Mousavi et al.~\cite{mousavi_decision_2016} that accepts input $x \times y \times z$ if $[x]+[y]=[z]$. 
On input $x \times y \times z$ this automaton leads to the state denoted by the sequence $([x0^n] + [y0^n]-[z0^n])_{n \geq 0}$. We modify this automaton to keep track of some extra information. Consider the automaton $M'_{\add}$ created from $M_{\add}$ that on input $xa \times yb \times ze$ leads to the state denoted by $[s, s', e]$ where $s=([xa0^n] + [yb0^n]-[ze0^n])_{n \geq 0}$ and $s' = ([x0^n] + [y0^n]-[z0^n])_{n \geq 0}$. 

If we have $M_{\add} = (Q, \Sigma^3_2, \delta, q_0, A)$, we create $M'_{\add}=(Q', \Sigma^3_2, \delta', q'_0, A')$ as follows.
\begin{align*}
    Q' &= Q \times Q \times \Sigma_2, \\
    \delta'([s, s', e], [a, b, e]) &= [\delta(s, [a, b, e]), s , e], \\
    q'_0 &= [q_0, q_0, 0], \\
    A' &= \{[s, s', e] \suchthat s \in A\}.
\end{align*}

We use $M'_{\add}(x, y, z)[i]$ to show the $i$-th component of the state $M'_{\add}$ reaches on input $x \times y \times z$ for $i \in \{0, 1, 2\}$; for example, $M'_{\add}(x, y, z)[0]$ is $([x0^n] + [y0^n]-[z0^n])_{n \geq 0}$. 
The automaton $M'_{\add}$ has $O(1)$ states.

\begin{lemma} \label{lemma:=c-sc}
The minimal automaton for recognizing the relation
$[x]_k = c$ has $O(\log c)$ states. 
\end{lemma}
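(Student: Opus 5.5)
The plan is to build an explicit DFA for the base-$k$ relation $[x]_k = c$ and count its states, since the minimal automaton can only have fewer. Here $[x]_k$ denotes the integer represented msd-first in base $k$ by the word $x$ over $\Sigma_k$. By the leading-zero convention of Section~\ref{subsection:fa}, the accepted language is exactly
$$L = \{ x \in \Sigma_k^* \suchthat [x]_k = c \} = 0^*\,(c)_k,$$
where $(c)_k = w_1 w_2 \cdots w_\ell$ is the canonical base-$k$ representation of $c$, with no leading zeros.

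First I treat the trivial case $c = 0$. Then $L = 0^*$, which is accepted by a single accepting initial state with a self-loop on $0$; this is $O(1) = O(\log c)$ by the paper's convention.

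For $c \geq 1$ I take $\ell = \lfloor \log_k c \rfloor + 1$ and define states $q_0, q_1, \ldots, q_\ell$. The state $q_j$ means ``the nonzero part read so far is the prefix $w_1 \cdots w_j$.'' The transitions are:
\begin{itemize}
\item $\delta(q_0, 0) = q_0$, which is the required self-loop handling leading zeros;
\item $\delta(q_j, w_{j+1}) = q_{j+1}$ for $0 \leq j < \ell$ (note $w_1 \neq 0$, so this does not clash with the self-loop);
\item all other transitions go to the uncounted dead state.
\end{itemize}
The only accepting state is $q_\ell$.

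Correctness is a straightforward induction on the input length. A word $x$ drives the automaton to $q_j$ exactly when $x = 0^i w_1 \cdots w_j$ for some $i \geq 0$, so it is accepted exactly when $x \in 0^*(c)_k$, that is, when $[x]_k = c$. Every state is accessible and co-accessible, so no state needs to be discarded.

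The count is $\ell + 1 = \lfloor \log_k c \rfloor + 2 = O(\log c)$ non-dead states, since $k$ is fixed. The minimal DFA has at most this many states, which proves the lemma. None of these steps is a real obstacle. The only point needing care is the interaction between the leading-zero self-loop and the first digit, and this is resolved because $w_1 \neq 0$. If a matching lower bound were wanted, the prefixes $w_1 \cdots w_j$ are pairwise distinguishable by their unique accepting suffixes, which gives $\Theta(\log c)$, but the lemma only asks for the upper bound.
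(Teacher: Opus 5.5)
Your construction is the same linear chain the paper uses: a leading-zero self-loop on the initial state, then a path spelling out the representation of $c$. Because the first letter is nonzero, the loop and the path do not clash.

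One correction of setting. The subscript $k$ in the statement is a leftover from the base-$k$ companion paper. Here the lemma is about Fibonacci representation: the paper's proof chains along the Fibonacci representation of $c$, and Theorem~\ref{theorem:=c} applies the lemma to $M_{=(c/2)}$ with Fibonacci input. Your argument transfers verbatim once you make three changes:
\begin{itemize}
\item replace $(c)_k$ by $(c)$;
\item restrict, as the paper does throughout, to valid inputs with no factor $11$, so that $\{x : [x]=c\} = 0^*(c)$ (for example, $[11]=[100]=3$ otherwise);
\item justify the length bound by $c \geq F_{\ell+1} \geq \alpha^{\ell-1}$, which gives $\ell \leq 1+\log_\alpha c = O(\log c)$.
\end{itemize}
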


\begin{proof}
Such an automaton consists of a linear chain of nodes labeled with the Fibonacci representation of $c$, which has
length $O(\log c)$.  
\end{proof}

\begin{theorem} \label{theorem:=c}
    Let $c \geq 0$ be a fixed constant. The automaton recognizing the relation $[x] = c$ can be computed in $O((\log ^2 c)(\log \log c))$ time.
\end{theorem}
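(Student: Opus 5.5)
We model the construction on what {\tt Walnut} actually does when handed the formula ``$x = c$'': it does not write down the chain of Lemma~\ref{lemma:=c-sc} directly. Instead it builds the constant $c$ from smaller constants using only the addition automaton $M_{\add}$ and existential quantification, then minimizes. The plan is to choose a decomposition of $c$ whose intermediate automata all stay small, and to bound the cost of each step.

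First, we write $c$ via its Fibonacci representation $(c)=e_t\cdots e_2$, with $t=O(\log c)$. We build automata $A_j$ recognizing $[x]=F_j$ by the recurrence $[x]=F_j \iff \exists u,v\,([u]=F_{j-1}\wedge [v]=F_{j-2}\wedge [u]+[v]=[x])$. Then we obtain $c$ by successively adding the $O(\log c)$ terms $F_j$ with $e_j=1$. This uses Horner-like accumulation: $B_0$ recognizes $[x]=0$, and $B_{k+1}$ recognizes $\exists u,v\,(B_k(u)\wedge A_j(v)\wedge [u]+[v]=[x])$. In total there are $O(\log c)$ combination steps.

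Second, we bound the cost of a single step. Each intermediate automaton, after minimization, recognizes $[x]=c'$ for some $c'\le c$, so by Lemma~\ref{lemma:=c-sc} it has $O(\log c)$ states. The product with $M_{\add}$ (which has $O(1)$ states, as do its variants) and with the other operand has $O(\log c)$ reachable states: once $x$ is forced, the pairs of chain positions move in lockstep because all three inputs are read synchronously and must have equal length. Hence the product has $O(\log c)$ states rather than $O(\log^2 c)$. Projecting away $u,v$ gives an NFA, and its determinization also stays at $O(\log c)$ states. The reason is that the projected language is a singleton (up to leading zeros), so the reachable subsets are determined by position in the input. We then minimize with Hopcroft/Valmari in $O(N\log N)$ time with $N=O(\log c)$, which costs $O(\log c\,\log\log c)$ per step.

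Multiplying by the $O(\log c)$ steps gives the bound $O((\log^2 c)(\log\log c))$. The main obstacle is step two, specifically justifying that the subset construction after the existential projection does not blow up. To handle it, we will argue directly that for each input prefix the set of product states reachable is bounded by a constant. The argument is that the carry information of $M_{\add}$ lies in a finite set, and the positions in the two chains are fixed by the prefix length. If that argument fails, we fall back to a weaker per-step bound of $O(\log c)$ subsets times a constant, which still suffices.
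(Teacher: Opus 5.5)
Your count of $O(\log c)$ combination steps is fine. The gap is the per-step bound of $O(\log c)$ states, which fails for your decomposition.

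\textbf{Why the product is too large.} In almost every step you combine \emph{two} chains of length $\Theta(\log c)$: $A_{j-1}$ with $A_{j-2}$, or $B_k$ (recognizing $[u]=c'$) with $A_j$. In the forward product $B_k(u)\times A_j(v)\times M_{\add}(u,v,x)$, the chain positions $p_1,p_2$ do \emph{not} move in lockstep, because the tracks $u$ and $v$ carry independent numbers of leading zeros. For any $p_1\le|(c')|$ and $p_2\le|(F_j)|$, take $u=0^{L-p_1}\pre(c',p_1)$ and $v=0^{L-p_2}\pre(F_j,p_2)$ with $L$ large, and let $x$ be a length-$L$ representation of $[u]+[v]$. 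This input reaches $(p_1,p_2,\cdot)$ with $M_{\add}$ in an accepting, hence live, state. The relation $|(c')|-p_1=|(F_j)|-p_2$ you have in mind holds only for \emph{co-accessible} states. So the product you actually build has $\Theta(\log^2 c)$ states.

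\textbf{Why determinization is also too large.} The subset reached on $x$ depends on $x$ itself, not just on $|x|$. Every $x$ with $[x]$ within $O(1)$ of $[\pre(c',p_1)]+[\pre(F_j,p_2)]$ yields a nonempty, mostly dead, subset, so $\Theta(\log^2 c)$ subsets appear. The fact that the projected language is a singleton bounds only the \emph{minimal} automaton, not the subset construction. Summing over steps, building $A_4,\dots,A_t$ alone costs $\sum_{j\le t}\Theta(j^2)=\Theta(\log^3 c)$, which already exceeds $O((\log^2 c)(\log\log c))$. Your fallback simply restates the unproved claim.

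\textbf{The missing idea.} Choose a recursion in which at most one operand per step is large. The paper halves $c$:
\begin{itemize}
\item for even $c$ it uses $\exists y\,(M_{=(c/2)}(y)\wedge M_{\add}(y,y,x))$;
\item for odd $c$ it uses $\exists y\,(M_{=(c-1)}(y)\wedge M_{\add}(y,1,x))$.
\end{itemize}
Doubling feeds the same track $y$ into both addends, and incrementing adds only a constant-size automaton. Each product is therefore a single $O(\log c)$ chain times $O(1)$ states, with no offset problem. The paper argues the projected automaton needs no real determinization, and each of the $O(\log c)$ steps then costs $O(\log c\log\log c)$ for minimization.

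Alternatively, you could keep your decomposition and construct only the trim (co-accessible) part of each product, for example by a backward search from the accepting states, where the lockstep relation does hold. You would still have to prove an $O(\log c)$ bound on the resulting subsets, and this is not the forward construction you said you were modelling.
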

\begin{proof}
    We recursively create the automaton $M_{=c}$.

    The base case is $c=0$ and the automaton has $1$ state.
    
    If $c$ is even, we recursively compute a minimal DFA $M_{=(c/2)}$ recognizing $[y] = c/2$ on input $y$ and then use the first-order expression
    \[
    \exists y \ M_{=(c/2)}(y) \land M_{\add}(y, y, x)
    \]
    which is translated into an automaton.  We know the automaton for $M_{\add}$ has $O(1)$ states and by Lemma \ref{lemma:=c-sc}, the automaton $M_{=(c/2)}$ has $O(\log c)$ states. First we apply the product construction to $M_{=(c/2)}$ and $M_{\add}$ such that the $M_{=(c/2)}$ part processes $y$ and the $M_{\add}$ part processes $y \times y \times x$; the resulting automaton has $O(\log c)$ states. Each state in the product automaton is of the form $[q, p]$ where on input $x \times y$, $q$ is a state from the minimal automaton $M_{=(c/2)}$ representing the number of letters from $(c/2)$ matched with $y$ by Lemma \ref{lemma:=c-sc} and $p$ is a state from $M_{\add}$ such that  $p(0)= 2[y]-[x]$.
    Then we apply the $\exists y$ quantifier by removing the component corresponding to $y$ from the transitions and we get an NFA. In the NFA there are no two same transitions from a single state leading to different states. So after subset construction the resulting DFA is the same as the NFA. Then we do minimization and the resulting DFA is the automaton from Lemma \ref{lemma:=c-sc} with $O(\log c)$ states.

    If $c$ is odd, we recursively compute a DFA $M_{=(c-1)}$ recognizing $[y] = c-1$ on input $y$ and then use the first-order expression
    \[
    \exists y \ M_{(c-1)}(y) \land M_{\add}(y, 1, x)
    \]
    which is translated into an automaton. Following a construction similar to the one described for the case where $c$ is even, we get the DFA from Lemma \ref{lemma:=c-sc} for constant $c$ with $O(\log c)$ states.

    Taking minimization steps into account, we get the following recursive formula for the time complexity:
    \[  T(c) =
\begin{cases}
     O((\log c) (\log \log c)) + T(c/2), & \text{if $c \bmod 2 = 0$}; \\
    O((\log c) (\log \log c))  + T(c-1), & \text{if $c \bmod 2 = 1$}.
\end{cases}
\]
So the total time is $ O((\log^2  c) (\log \log c))$.
\end{proof}

\begin{theorem} \label{theorem:[x]+c=[z]-time}
    Let $c \geq 0$ be a fixed constant. The automaton recognizing the relation $[x] + c =[z]$ can be computed in $O((\log^2 c)(\log \log c))$ time.
\end{theorem}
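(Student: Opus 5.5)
The plan is to reduce to Theorem~\ref{theorem:=c} and do only constant additional work on automata whose size is $O(\log c)$. We use the first-order expression
\[
\exists y \ M_{=c}(y) \land M_{\add}(x, y, z),
\]
which says exactly that $[z] = [x] + c$. Building it takes three steps, in this order.

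\textbf{Step 1: build $M_{=c}$.} Compute the DFA $M_{=c}$ recognizing $[y]=c$ using the recursive procedure of Theorem~\ref{theorem:=c}. This costs $O((\log^2 c)(\log\log c))$ time. By Lemma~\ref{lemma:=c-sc}, the result has $O(\log c)$ states.

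\textbf{Step 2: form the product and project out $y$.} Take the product of $M_{=c}$, reading the $y$-track, with $M_{\add}$, reading $x \times y \times z$. Since $M_{\add}$ has $O(1)$ states, the product has $O(\log c)$ states and $O(\log c)$ transitions. It can be built in time linear in its size. Applying $\exists y$ erases the $y$-component of each transition label, which yields an NFA on $x \times z$ with $O(\log c)$ states.

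\textbf{Step 3: determinize and minimize.} This is the main obstacle: we must show the subset construction does not blow up. The state $q$ of $M_{=c}$ is determined by the number of symbols read, since $M_{=c}$ is a chain accepting only the words $0^j(c)$. For a fixed input length, there is therefore essentially a single admissible $y$ prefix; the exception is the leading-zero region, where $q$ stays at the start state. Given $q$ and the current $x,z$ letters, the next $y$ letter is forced by the chain. Hence, as in the proof of Theorem~\ref{theorem:=c}, no state of the NFA has two transitions with the same label leading to different states.

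The delicate point is the leading-zero self-loop at the initial state, where $y$ could either stay in $0^*$ or start reading $(c)$. I would handle this in one of two ways. The first option is to note that the resulting subsets have size $O(1)$: at most one ``waiting'' branch plus one branch at each position of the chain consistent with the remaining length. The cleaner alternative is to use the fact that the minimal DFA for $[x]+c=[z]$ has $O(\log c)$ states (Theorem~\ref{theorem:[x]+c=[y]}). With this, every reachable subset corresponds to a pair (chain position, $M_{\add}$ state) together with a bounded amount of extra information, so the subset automaton has $O(\log c)$ states.

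Minimization with Hopcroft's or Valmari's algorithm then costs $O((\log c)(\log\log c))$.

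\textbf{Total cost.} Adding the three steps, the total running time is $O((\log^2 c)(\log\log c)) + O((\log c)(\log\log c)) = O((\log^2 c)(\log\log c))$, as claimed.
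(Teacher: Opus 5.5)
Your route is the paper's: the same formula $\exists y\ M_{=c}(y)\land M_{\add}(x,y,z)$, then product, projection, subset construction and minimization, with the total cost dominated by building $M_{=c}$. You are also right that the leading-zero self-loop makes the projected NFA genuinely nondeterministic. For example, from the initial state on the $(x,z)$-letter $[0,1]$, both $y$-letters lead to live states. The paper's proof passes over this by asserting that the subset construction returns the NFA unchanged.

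However, neither of your two options closes this step, and it is the only nontrivial one. Option 2 is a non sequitur. Theorem~\ref{theorem:[x]+c=[y]} bounds the \emph{minimal} DFA, but the time you must account for is spent building the unminimized subset automaton, and a small minimization says nothing about that automaton's size. Your sentence ``every reachable subset corresponds to a pair \ldots\ together with a bounded amount of extra information'' is exactly the claim that needs proof. Option 1 relies on a pruning mechanism the construction does not have. The subset construction cannot see the remaining length, so after $k$ letters every chain position $j\le\min(k,|(c)|)$ is a priori present: that is $\Theta(\log c)$ branches, not $O(1)$. Moreover, even an $O(1)$ bound on subset size would only give $O((\log c)^{O(1)})$ possible subsets, not $O(\log c)$, which would break your time bound.

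The missing argument is this. The branch at chain position $j$ survives only if its $M_{\add}$ component is live. That means $[x]+[\pre(c,j)]-[z]$ and $[x0]+[\pre(c,j)0]-[z0]$ must lie in a fixed bounded set. Now $[\pre(c,j)]\ge F_{j+1}$, and $[\pre(c,j+1)]>\alpha[\pre(c,j)]-\beta^2$ by Lemma~\ref{lemma:[x0]}(b). So these values spread apart geometrically, and for a given $[z]-[x]$ only $O(1)$ positions $j$ survive. Furthermore, the surviving element with the largest $j$ determines $[z]-[x]$ and $[z0]-[x0]$. It therefore determines every other element of the subset, since all smaller $j$ automatically satisfy $j\le k$. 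Hence there are at most as many reachable subsets as NFA states, namely $O(\log c)$, each of size $O(1)$. Determinization thus costs $O(\log c)$, and the rest of your accounting goes through.
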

\begin{proof}
Assume we have modified $M'_{\add}$ again so that on input $xa \times yb \times ze$, it keep track of $a$ in addition to $e$ and its states are of the form $[s, s', a, e]$.  We can use the following expression to create the automaton $M_c$ recognizing $[x] + c=[z]$ on input $x \times z$. Moreover, we have automaton $M_{=c}$ recognizing $[y]=c$ from Theorem \ref{theorem:=c}.

    \[
    \exists y \ M'_{\add}(x, y, z) \AND M_{=c}(y).
    \]
    First, we apply the product construction to $M'_{\add}$ and $M_{=c}$ such that the $M'_{\add}$ part processes $x\times y \times z$ and the $M_{=c}$ part processes $y$. The resulting automaton $M'_c$ has $O(\log c)$ states.
     Let $\pre(c, i)$ denote the first $i$ letters in $(c)$. On input $xa\times yb \times ze$ where $yb$ is $0^*y'b$ and $y'b$ is a prefix of $(c)$ the automaton $M'_c$ leads to a state $[q, p]$ where $q$ is from $M'_{\add}$ and $p=|y'b|$ is from $M_{=c}$ indicating how many letters from $y$ have been matched with $(c)$. Therefore, we have $q=[s, s', a, e]$ where $s(0)=[xa] + [yb] - [ze]$, $s'(0)=[x]+[y]-[z]$, and $[\pre(c, p)] = [yb]$. So $[ze] - [xa] = [\pre(c, p)] - s(0)$ and $[z]-[x]=[\pre(c, p-1)]-s'(0)$.  So to each state $[q, p]$ in $M'_c$ a tuple $[d, d', a, e]$ can be attributed where $d = [ze]-[xa]$ and $d'= [z]-[x]$ and two states reachable by the same $xa, ze$ cannot have different $[d, d', a, e]$ attributed to them.

    Then we remove the $y$ component from transitions and determinize the automaton to obtain a DFA. Consider a (non-dead) state reachable on input $xa \times ze$ in the DFA. This state consists of some $[q, p]$ such that $[ze] - [xa] = [\pre(c, p)] - s(0)$ and $[z]-[x]=[\pre(c, p-1)]-s'(0)$. In the NFA there are no two same transitions from a single state leading to different states. So after subset construction the resulting DFA is the same as the NFA.

From the proof of Theorem \ref{theorem:[x]+c=[y]}, we know the automaton recognizing $[x]+c= [z]$ on input $x \times z$ only needs to keep track of a difference $[z] - [x]$ in range $[0, c]$ with $O(\log c)$ states. So after some minimization, the automaton $M_c=(Q, \Sigma_2, q_0, \delta, A)$ can be formally defined as follows.  Let $I_c = [0, c]$. 
 \begin{align*}
     Q &= \{[a, b, d, d'] \suchthat d, d' \in I_c\}, \\
     q_0 &= [0, 0, 0, 0], \\
     \delta([a', b', d, d'], [a, b]) &= [a, b, g, d] \ \text{where} \ g= d + d' +b' - a' + b - a, \\
     A &= \{[a, b, d, d'] \suchthat d = c\}.
 \end{align*}
 Note that not all states in the formal definition above are necessarily reachable.

    \sloppy
Considering the time spent on creating $M_{=c}$ and minimization steps, it takes $O((\log^2 c)(\log \log c))$ time to create the automaton $M_c$ recognizing $[x] + c = [z]$ on input $x \times z$.   
\end{proof}

We continue with the automaton recognizing the relation $[z] = n[x]$ where $n$ is a fixed constant. 
There are a few possible recursions for implementing this automaton.
\begin{itemize} [nosep]
    \item $\exists y \ M_{n-1}(x,y) \AND M_{\add}(x, y, z)$.
    \item  $\exists y_1, y_2 \ M_{\lfloor n/2 \rfloor}(x,y_1) \AND M_{\lceil n/2 \rceil}(x,y_2) \AND M_{\add}(y_1, y_2, z)$.
\end{itemize}
However, a third recursive implementation that follows is more efficient than the two above.

If $n$ is even, we
recursively compute a DFA $M_{n/2}(x,y)$ recognizing $[y] =  (n/2)[x]$
and then use the first-order expression
$$ \exists y \ M_{n/2}(x,y) \AND M'_{\add}(y, y, z),$$
which is translated into an automaton by a direct product construction
for the $\AND$, and projection of the $2$-nd coordinate to remove
transitions on $y$. 

If $n$ is odd, we recursively compute
a DFA $M_{n-1}(x,y)$ recognizing $[y]_k = (n-1)[x]_k$ and use the expression
$$ \exists y \ M_{n-1}(x,y) \AND M_{\add}'(x, y, z).$$
In both cases, the translation could conceivably generate a nondeterministic automaton (by the projection) and determinizing it could take exponential time.
However,  we show that this is not the case.

\begin{theorem} \label{theorem:M_n-buchi}
        Let $n \geq 1$ be a fixed constant.
    An automaton for recognizing $[y] = n[x]$ on input $x \times y$ can be computed in $O(n^6 \log^2 n)$ time. 
\end{theorem}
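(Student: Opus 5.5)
We prove the bound by induction on $n$, following the recursion just described: $n \mapsto n/2$ when $n$ is even and $n \mapsto n-1$ when $n$ is odd. The recursion has depth $O(\log n)$, since two consecutive steps at least halve $n$. It therefore suffices to show that one step, starting from the minimal automaton $M_{n'}$ with $n' \in \{n/2, n-1\}$, costs $O(n^6 \log n)$ time. The main claim is that every intermediate automaton has $O(n^2)$ states, and the determinization creates only $O(n^2)$ subsets, each of size $O(n)$ or so.

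First, the size of the input automaton. By Theorem~\ref{theorem:L_n,c-sc} the minimal DFA for $[y]=n'[x]$ has $O(n'^2)=O(n^2)$ states. Its states can be labelled (as in $M_{n,c}$) by tuples $[a',b',d,d']$ with $d=[y]-n'[x]$ and $d,d'\in I_{n'}$. Taking the product with $M'_{\add}$, which has $O(1)$ states, gives an automaton with $O(n^2)$ states and $O(n^2)$ transitions. In this product the $M'_{\add}$ component records $[y]+[y]-[z]$ (even case) or $[x]+[y]-[z]$ (odd case), together with the previous value. The two differences combine to give $D=[z]-n[x]$ and its predecessor $D'$. Tuples with $D\notin I_n$ (Theorem~\ref{theorem:I_n}), or with carries outside the finite range of $M_{\add}$, are dead and are discarded.

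Next we project away $y$ and apply the subset construction. The key step, mirroring Lemma~\ref{lemma:lin-subseq-consec}, is to bound which subsets can occur. Fix an input prefix $xa\times ze$. Every product state reachable on it is determined by the guessed $y$, and hence by $d=[y]-n'[x]\in I_{n'}$. So each subset contains at most $O(n)$ product states, one for each admissible $d$ (plus $O(1)$ carry data). Moreover, the pair $(D,D')$ together with the last letters determines the subset: the set of admissible $y$'s is exactly those keeping all differences inside the intervals, and this is a fixed function of the current state of $M_n$. Hence the number of distinct subsets is at most the number of states of $M_{n,0}$, which is $O(n^2)$. The subset construction then does $O(n^2)$ subsets $\times$ $O(n)$ elements $\times$ $O(1)$ letters $\times$ $O(n)$ work to hash or compare subsets. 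This is polynomial and comfortably within $O(n^6)$, and we can absorb generous set-manipulation costs to land at $O(n^6\log n)$ per level. Minimization of the resulting $O(n^2)$-state DFA costs $O(n^2\log n)$ by Hopcroft's or Valmari's algorithm, and we confirm the result is again the minimal automaton for $[z]=n[x]$, so the induction continues.

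The main obstacle is the claim that the subset reached after projection depends only on the $M_n$-style data $(a,e,D,D')$, and not on the particular $x,z$. I would try to prove it the way Theorem~\ref{theorem:=c} argues that no genuine nondeterminism arises: show that the witness $y$ consistent with given $x,z$ is unique whenever $D$ and the carries are bounded. That uniqueness would make the NFA essentially deterministic after pruning dead states. If it fails, I would fall back on the coarser counting above: at most $O(n^2)$ subsets, each of $O(n)$ consecutive values of $d$, giving a per-step cost of at most $O(n^4)$ subset-pairs times $O(n^2)$ work. Summing over the $O(\log n)$ levels then gives the stated $O(n^6\log^2 n)$.
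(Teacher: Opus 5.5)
There is a genuine gap, at the step the theorem rests on: you never bound the number of subsets produced when $y$ is projected away and the NFA is determinized.

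You assert that the subset reached on $xa\times ze$ depends only on $(a,e,D,D')$, so that there are at most as many subsets as states of $M_{n,0}$. This is stated without proof, and you then flag it yourself as the main obstacle. The route you propose, uniqueness of the witness $y$, fails on prefixes. Take $n=3$, using $M_2$ and $M'_{\add}(x,y,z)$. After reading $x\times z=[0,0][0,1]$, both $y=00$ and $y=01$ are still alive. They are prefixes of the accepted triples $(x,y,z)=(0001,0010,0100)$, i.e.\ $(1,2,3)$, and $(000101,010000,010101)$, i.e.\ $(4,8,12)$.

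In general an element of the subset only satisfies $d-s(0)=D$ (odd case) or $2d-s(0)=D$ (even case). The carry $s(0)$ takes several values on non-dead states of $M_{\add}$, so several witnesses can coexist. Nothing in your argument shows that which of them survived all earlier prefixes is a function of the current $(a,e,D,D')$.

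Your fallback is circular. It again starts from ``at most $O(n^2)$ subsets'', which was derived from the unproven claim. Knowing only that each subset consists of $O(n)$ of the $O(n^2)$ product states gives no polynomial bound on the number of subsets.

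What is missing is a structural description of the reachable subsets that does not require them to be determined by $(a,e,D,D')$. The paper supplies one:
\begin{itemize}
\item As in Lemma~\ref{lemma:lin-subseq-consec} and the proof of Theorem~\ref{theorem:lin-subseq-sc}, the $M_{n'}$-components of a subset take one of $O(n^4)$ possible configurations.
\item All elements share the same attributed pair $([ze]-n[xa],[z]-n[x])$, which has $O(n^2)$ values.
\item The adder components are then forced.
\end{itemize}
This gives $O(n^6)$ DFA states per level, hence $O(n^6)$ transitions and $O(n^6\log n)$ for minimization per level, and $O(n^6\log^2 n)$ over the recursion.

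Alternatively, you can close your own argument with an observation you only half use. Since $M_{\add}$ has finitely many non-dead states, $s(0)$ and $s'(0)$ lie in a fixed finite set. So $D$ and $D'$ range over $O(n)$ values. Once $(a,e,D,D')$ is fixed, $d$ and $d'$ are confined to $O(1)$ values, and every subset lies inside a fixed $O(1)$-size set of product states determined by $(a,e,D,D')$. This bounds the number of subsets by $O(n^2)\cdot 2^{O(1)}$, without needing the subset itself to be a function of $(a,e,D,D')$.
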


\begin{proof}
    
We first prove by induction on $n$ that we can recursively create the automaton $M_n$ with states of the form $[a, b, d, d']$ such that on input $xa \times yb$ the automaton reaches a state where $d = [yb]-n[xa]$ and $d'=[y]-[x]$, and $d, d' \in I_n = [-n, 2n-1]$.

Base case: $n=1$.

In this case, a simple $1$-state automaton recognizes $[y]=[x]$. The single state corresponds to $[a, b, d, d'] = [0, 0, 0, 0]$.

For the induction step, assume the result is true for all $n'<n$; we prove it for $n$.
There are two cases.

\bigskip\noindent{\bf
Case 1:} $n \bmod 2 = 0$.

By the induction hypothesis, we have constructed a DFA $M_{n/2}$ with $O(n^2)$ states corresponding to all possible $[a, b, d, d']$ where $a, b \in \Sigma_2$ and $d, d' \in I_{n/2} = [-(n/2), 2(n/2)-1]$.
In this case, we use the formula
$$ \exists y \ M_{n/2}(x,y) \AND M'_{\add}(y, y, z).$$
First we create an automaton $M'_{n}$ that is the product construction of $M_{n/2}$ and $M'_{\add}$. So this automaton has $O(n^2)$ states. Consider an input $xa \times yb \times ze$ for $M'_{n}$ leading to some state $[q, p]$ where $q$ is from $M_{n/2}$ and $p$ is from $M'_{\add}$. By the construction of $M'_{n}$, we know $q=[a, b, d, d']$ where $d=[yb]-(n/2)[xa]$, $d'=[y]-(n/2)[x]$ and  $p=[s, s', e]$ where $s=(2[yb0^n]-[ze0^n])_{n \geq 0}$, $s'=(2[y0^n]-[z0^n])_{n \geq 0}$.
So we have $2d-s(0) = [ze]-n[xa]$ and $2d'-s'(0) = [z]-n[x]$.
Therefore, for each state reachable by $xa \times yb \times ze$ in $M'_n$ a tuple $[a, e, \bar{d}, \bar{d'}]$ can be attributed where $\bar{d}=[ze]-n[xa]$ and $\bar{d'}=[z]-n[x]$. Furthermore, two input with the same $xa, ze$ cannot lead to states with different $[a, e, \bar{d}, \bar{d'}]$ attributed to them. 

Next, we project away the $y$ component.  The resulting automaton is an NFA, and we use the subset construction to create a DFA.  Consider a state from the DFA reachable on $xa \times ze$. This state consists of some pairs $[q, p]$ from the NFA such that $q=[a, b, d, d']$, $p=[s, s', e]$ and
we can create $[a, e, 2d-s(0), 2d'-s'(0)]$ from $[q, p]$ where $[ze]-n[xa]=2d-s(0)$ and $[z]-n[x]=2d'-s'(0)$.

In each state of the DFA, using an argument similar to the one used in the proof of Theorem \ref{theorem:lin-subseq-sc}, there are $O(n^4)$ possibilities for the $q=[a, b, d, d']$ parts. 
All $[q, p]$ in this state should have the same $2d-s(0)$ and $2d'-s'(0)$ ($O(n^2)$ possibilities).
If we look at the transition table from Hamoon et al.~\cite{mousavi_decision_2016} and consider all $\delta_{\add}(s', [b, b, e])=s$ where $\delta_{\add}$ is the transition function of $M_{\add}$, then for each $b$ the $(s'(0), s(0))$ pairs obtained from transitions are unique. So considering we have already determined the $b, d, d'$ of each $[q, p]$ and we have fixed the $2d-s(0)$ and $2d'-s'(0)$, the $s, s'$ are now determined as well ($O(1)$ possibilities). So there are $O(n^6)$ states in the DFA.

From the proof of Theorem \ref{theorem:L_n,c-sc}, we know the automaton recognizing $n[x]= [z]$ on input $x \times z$ only needs to keep track of a difference $[z] - n[x]$ in range $I_n = [-n, 2n-1]$. So after some minimization, the DFA  $M_{n}=(Q, \Sigma^2_k, \delta, q_0, A)$ can be formally defined as follows.
\begin{align*}
Q &= \{ [a, e, d, d']  \suchthat a, e \in \Sigma_2, \ d, d \in I_n] \}, \\
q_0 &= [0, 0, 0, 0], \\
\delta([a', e', d, d'], [a, e]) &= [a, e, g, d] \ \text{where} \ g = d +d' + e'-na' +e -na,
 \\
A &= \{[a, e, d, d'] \suchthat d=0\}.
\end{align*}

\bigskip\noindent{\bf 
Case 2:} $n \bmod 2 = 1$.

By the induction hypothesis, we have a DFA $M_{n-1}$ with $O(n^2)$ states corresponding to all possible $[a, b, d, d']$ where $a, b \in \Sigma_2$ and $d, d' \in I_{n-1} = [-(n-1), 2(n-1)-1]$.
In this case, we use the formula
$$ \exists y \ M_{n-1}(x,y) \AND M'_{\add}(x, y, z)$$
to create an automaton $M'_{n}$ from the usual product construction for $M_{n-1}$ and $M'_{\add}$. So this automaton has $O(n^2) $ states. Consider an input $xa \times yb \times ze$ for $M'_{n}$ leading to some state $[q, p]$ where $q$ is from $M_{n-1}$ and $p$ is from $M'_{\add}$.
By the construction of $M'_n$, we know $q=[a, b, d, d']$ where $d=[yb]-(n-1)[xa]$, $d'=[y]-(n-1)[x]$ and $p=[s, s', e]$ where $s=([xa0^n] + [yb0^n]-[ze0^n])_{n \geq 0}$, $s'=([x0^n] + [y0^n]-[z0^n])_{n \geq 0}$.
So we have $d-s(0)=[ze]-n[xa]$ and $d'-s'(0)=[z]-n[x]$.
Therefore, for each state reachable by $xa \times yb \times ze$ in $M'_n$ a tuple $[a, e, \bar{d}, \bar{d'}]$ can be attributed where $\bar{d}=[ze]-n[xa]$ and $\bar{d'}=[z]-n[x]$. Furthermore, two input with the same $xa, ze$ cannot lead to states with different $[a, e, \bar{d}, \bar{d'}]$ attributed to them.

Next, we project away the $y$ component.  The resulting automaton is an NFA, and we use the subset construction to create a DFA.  Consider a state from the DFA reachable on $xa \times ze$. This state consists of some pairs $[q, p]$ from the NFA such that $q=[a, b, d, d']$, $p=[s, s', e]$ and
we can create $[a, e, d-s(0), d'-s'(0)]$ from $[q, p]$ where $[ze]-n[xa]=d-s(0)$ and $[z]-n[x]=d'-s'(0)$.

In each state of the DFA, using an argument similar to the one used in the proof of Theorem \ref{theorem:lin-subseq-sc}, there are $O(n^4)$ possibilities for the $q=[a, b, d, d']$ parts and there are $O(1)$ possibilities for $e$ in $p$.
All $[q, p]$ in this state should have the same $d-s(0)$ and $d'-s'(0)$ ($O(n^2)$ possibilities).
If we look at the transition table from Hamoon et al.~\cite{mousavi_decision_2016} and consider all $\delta_{\add}(s', [a, b, e])=s$ where $\delta_{\add}$ is the transition function of $M_{\add}$, then for each $a, b, e$ the $(s'(0), s(0))$ pairs obtained from transitions are unique. So considering we have already determined the $a, b, e, d, d'$ of each $[q, p]$ and we have fixed the $d-s(0)$ and $d'-s'(0)$, the $s, s'$ are now determined as well ($O(1)$ possibilities). So there are $O(n^6)$ states in the DFA.

From the proof of Theorem \ref{theorem:L_n,c-sc}, we know the automaton recognizing $n[x]= [z]$ on input $x \times z$ only needs to keep track of a difference $[z] - n[x]$ in range $I_n = [-n, 2n-1]$. So after some minimization, the DFA  $M_{n}=(Q, \Sigma^2_k, \delta, q_0, A)$ can be formally defined as follows.
\begin{align*}
Q &= \{ [a, e, d, d']  \suchthat a, e \in \Sigma_2, \ d, d \in I_n] \}, \\
q_0 &= [0, 0, 0, 0], \\
\delta([a', e', d, d'], [a, e]) &= [a, e, g, d] \ \text{where} \ g = d +d' + e'-na' +e -na,
 \\
A &= \{[a, e, d, d'] \suchthat d=0\}.
\end{align*}

Next we analyze the time complexity of creating the above automaton. We use $T(n)$ to show the time required to recursively create $M_n$. We use the number of transitions traversed as a measure of the time required.

In Case 1, the number of transitions traversed to obtain $M_n$ is: $O(1)$ transitions for $M'_{\add}$, $O(n^2)$ transitions for the product construction of $M_{n/2}$ and $M'_{\add}$, $O(n^2)$ transitions for projecting away the $y$ component, $O(n^6)$ transitions for the determinized automaton, in total $O(n^6)$ transitions in addition to the number of transitions traversed for creating $M_{n/2}$ and the minimizations.

In Case 2, the number of transitions traversed to obtain $M_n$ is: $O(n^2)$ transitions for $M_{n-1}$, $O(1)$ transitions for $M'_{\add}$, $O(n^2)$ transitions for the product construction of $M_{n-1}$ and $M'_{\add}$, $O(n^2)$ for projecting away the $y$ component, $O(n^6)$ for the determinized automaton, in total $O(n^6)$ transitions in addition to the number of transitions traversed for creating $M_{n-1}$ and the minimizations.

Putting everything together, we get the following recursive formula:
\[
\begin{cases}
    T(n) = O(n^6\log n) + T(n/2), \\
    T(n) = O(n^6 \log n) + T(n-1).
\end{cases}
\]
So we have $T(n) = O(n^6\log^2 n)$.
\end{proof}

\begin{theorem} \label{theorem:n[x]+c=[z]-time}
     Let $n \geq 1$, $0 \leq c < n$ be
     fixed constants. The automaton recognizing $n[x] + c = [z]$ on input $x \times z$ can be computed in $O(n^6 \log^2 n)$ time.
\end{theorem}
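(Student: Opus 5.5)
We build the automaton in two stages, reusing the constructions of Theorem~\ref{theorem:M_n-buchi} and Theorem~\ref{theorem:[x]+c=[z]-time}, and writing the relation as the first-order formula
$$\exists y \ M_n(x,y) \AND M'_{c}(y,z),$$
where $M_n$ recognizes $[y]=n[x]$ and $M'_c$ recognizes $[y]+c=[z]$. By Theorem~\ref{theorem:M_n-buchi}, $M_n$ is computed in $O(n^6\log^2 n)$ time and, after minimization, has $O(n^2)$ states of the form $[a,b,d,d']$ with $d,d'\in I_n$. Since $c<n$, Theorem~\ref{theorem:[x]+c=[z]-time} gives $M'_c$ in $O((\log^2 n)(\log\log n))$ time, with $O(\log c)$ states of the form $[b,e,\hat d,\hat d']$ tracking $[ze]-[yb]$ and $[z]-[y]$ in $[0,c]$. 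Both costs are dominated by $O(n^6\log^2 n)$, so only the combination step remains to be analysed.

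Form the product of $M_n$ and $M'_c$, with the $M_n$ part reading $x\times y$ and the $M'_c$ part reading $y\times z$. It has $O(n^2\log n)$ states and as many transitions. As in the proof of Theorem~\ref{theorem:M_n-buchi}, to each reachable product state on input $xa\times yb\times ze$ we attach the tuple $[a,e,d+\hat d,\, d'+\hat d']$. This tuple equals $[a,e,[ze]-n[xa],[z]-n[x]]$ and depends only on $xa$ and $ze$. We then project away $y$ to get an NFA and determinize it.

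The main obstacle is bounding the size of the determinized automaton. A DFA state reached on $xa\times ze$ is a set of pairs $[q,p]$ that all share the same value $\bar d=[ze]-n[xa]$ and $\bar d'=[z]-n[x]$, which gives $O(n^2)$ choices.
\begin{itemize}
\item For the $q=[a,b,d,d']$ components, the argument of Lemma~\ref{lemma:lin-subseq-consec} and Theorem~\ref{theorem:lin-subseq-sc} applies: the admissible $[yb]$ form an interval, so the $d$ values are consecutive and determined by $S$, $T$, $d'_0$ and a factor of an automatic sequence. This gives $O(n^4)$ possibilities.
\item Once $q$ is fixed, $\hat d=\bar d-d$ and $\hat d'=\bar d'-d'$ are forced, and the bits $b,e$ are known. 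Hence $p$ is determined uniquely.
\end{itemize}
So the DFA has $O(n^6)$ states. The care needed here is to check that $\hat d$ stays within $[0,c]$ without creating extra freedom; this only shrinks the interval of admissible $[yb]$. Minimizing with Hopcroft's algorithm then costs $O(n^6\log n)$, and the result is the automaton of Theorem~\ref{theorem:L_n,c-sc} with accepting set $d=c$.

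Summing the three contributions gives a total of $O(n^6\log^2 n)+O(n^6\log n)+O(\log^2 n\log\log n)=O(n^6\log^2 n)$. An alternative is to use $\exists y\, M_n(x,y)\AND M'_{\add}(y,w,z)\AND M_{=c}(w)$, which gives the same bound. Following the paper's convention of counting transitions, I would present the first formulation.
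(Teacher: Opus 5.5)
Your proposal is correct and follows essentially the same route as the paper. Both use the formula $\exists y\, M_n(x,y)\land M_c(y,z)$, attach the tuple $[a,e,[ze]-n[xa],[z]-n[x]]$ to each product state, project away $y$ and determinize, and bound the DFA by $O(n^4)\cdot O(n^2)=O(n^6)$ states because $p$ is forced once $q$, $e$ and $(\bar d,\bar d')$ are fixed, then minimize. Your explicit accounting (the $O(n^6\log n)$ minimization, the $O((\log^2 n)\log\log n)$ cost of $M'_c$ via $c<n$, and the remark that the $[0,c]$ constraint only shrinks the interval of admissible $[yb]$) just spells out what the paper compresses into its final sentence.
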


\begin{proof}
Let $M_c$ be the automaton from Theorem \ref{theorem:[x]+c=[z]-time} and let $M_n$ be the automaton from Theorem \ref{theorem:M_n-buchi}. We say $M_c(y, z)$ is $1$ if and only if $[y] + c = [z]$. So we can use the following expression to create $M_{n, c}$:
\[
\exists y \  M_n(x, y) \ \land  M_c(y, z).
\]

We first apply product construction to $M_n$ and $M_c$. The resulting automaton $M'_{n,c}$ has $O(n^2\log c)$ states.

Consider an input $xa \times yb \times ze$ for $M'_{n, c}$ leading to some state $[q, p]$ where $q$ is from $M_n$ and $p$ is from $M_c$.
By the construction of $M'_{n, c}$, we know that $q=[a, b, d_q, d_q']$ where $d_q=[yb]-n[xa]$, $d_q'=[y]-n[x]$ and $p=[b, e, d_p, d_p']$ where $d_p = [ze]-[yb]$, $d_p'=[z]-[y]$. So we have $d_q + d_p = [ze]-n[xa]$ and $d'_q + d'_p = [z]-n[x]$.
Therefore, to each state in the automaton $M'_{n, c}$ a tuple $[a, e, d, d']$ can be attributed where $d=[ze]-n[xa]$ and $d'=[z]-n[x]$ and no two states reachable by the same $xa, ze$ can have different $[a, e, d, d']$ attributed to them.

Next, we project away the $y$ component from $M'_{n, c}$.  The resulting NFA needs to be determinized by subset construction to get a DFA.
Consider a state from the DFA reachable on $xa \times ze$. This state consists of some $[q, p]$ from the NFA such that $q=[a, b, d_q, d_q']$, $p=[b, e, d_p, d_p']$ and $d_q + d_p = [ze]-n[xa]$, $d'_q + d'_p = [z]-n[x]$. 
In each state of the DFA, using an argument similar to the one used in the proof of Theorem \ref{theorem:lin-subseq-sc}, there are $O(n^4)$ possibilities for the $q=[a, b, d_q, d_q']$ parts. Then we fix $e, d, d'$ for this state of DFA with $O(n^2)$ possibilities. Since $d_q + d_p =d$ and $d'_q + d'_p = d'$, we can now determine $d_p, d'_p$ as well. So the DFA has $O(n^6)$ states.

From the proof of Theorem \ref{theorem:L_n,c-sc}, we know the automaton recognizing $n[x]+c= [z]$ on input $x \times z$ only needs to keep track of a difference $[z] - n[x]$ in range $I_n = [-n, 2n-1]$. So after some minimization, the DFA $M_{n, c}=(Q, \Sigma^2_k, \delta, q_0, A)$ can be formally defined as follows.
\begin{align*}
Q &= \{ [a, e, d, d']  \suchthat a, e \in \Sigma_2, \ d, d' \in I_n \}, \\
q_0 &= [0, 0, 0, 0], \\
\delta([a', e', d, d'], [a, e]) &= [a, e, g, d] \ \text{where} \ g =d + d' + e'-na' + e-na,
 \\
A &= \{[a, e, d, d'] \suchthat d=c\}.
\end{align*}
Taking into account the minimization steps and the time spent on creating $M_n$, the time spent on creating $M_{n,c }$ is $O(n^6 \log^2 n)$.
\end{proof}

\begin{theorem} \label{theorem:h(ni+c)-time}
    Let $n \geq 1$, $0 \leq c < n$ be fixed constants. Given an $m$-state  DFAO generating $(h(i))_{i \geq 0}$, we can compute the DFAO for $(h(ni+c))_{i \geq 0}$  in $O( n^6 \log^2 n + m^2n^4 \log (m^2n^4) )$ time.
\end{theorem}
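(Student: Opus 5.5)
The plan is to follow the B\"uchi-arithmetic recipe that {\tt Walnut} would use for the linear subsequence:
\[
\exists y \ M_{n,c}(x,y) \AND M_h(y).
\]
Here $M_{n,c}$ recognizes $[y]=n[x]+c$, and $M_h$ is the given $m$-state DFAO, read as a family of DFAs, one per output value, or equivalently carried along as an output-labelled component. There are three steps: build $M_{n,c}$, form the product with $M_h$, then project away $y$ and determinize. The total time is the sum of the costs of these steps together with the minimizations.

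\emph{Step 1.} By Theorem~\ref{theorem:n[x]+c=[z]-time} we obtain the automaton $M_{n,c}$ in $O(n^6\log^2 n)$ time. After minimization it has the form given there, with states $[a,b,d,d']$ where $d,d'\in I_n$, so it has $O(n^2)$ states and $O(n^2)$ transitions.

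\emph{Step 2.} We take the product of $M_{n,c}$, reading $x\times y$, with $M_h$, reading $y$. This gives $O(mn^2)$ states and transitions in $O(mn^2)$ time. We then project away the $y$ coordinate. The result is exactly the UFAO $M_u$ of Theorem~\ref{theorem:lin-subseq-sc}, apart from the handling of leading zeros in the initial state.

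We must make sure the $y$-representations compatible with $x=0^j$ are accounted for. One way is to add the initial-state set $q_{0,u}$ explicitly. The other is to observe that after projection, the set of states reachable on $0^j$ stabilizes to the same set. I expect either route to cost only $O(n\cdot\log n)$ extra work.

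\emph{Step 3.} We determinize by the subset construction. Lemma~\ref{lemma:lin-subseq-consec} and the counting argument in the proof of Theorem~\ref{theorem:lin-subseq-sc} show that every reachable subset is one of $O(m^2n^4)$ possible subsets. Each reachable subset has size $O(n)$, since its third components are distinct elements of $I_n$. Each $M_u$-state in it has $O(1)$ outgoing transitions per letter.

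Computing a transition of the DFAO therefore costs $O(n)$ time if done naively. This would give $O(m^2n^5)$, which exceeds the claimed bound. This is the step I expect to be the main obstacle.

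To get the stated bound, I would measure cost the way the paper does, by the number of transitions of the constructed automaton. The DFAO has $O(m^2n^4)$ states and a constant number of transitions per state. So if we count cost as transitions traversed, as in the proof of Theorem~\ref{theorem:M_n-buchi}, the subset construction contributes $O(m^2n^4)$.

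Alternatively, one can represent each subset compactly by its parameters from the proof of Theorem~\ref{theorem:lin-subseq-sc}: $a$, $S$, $T$, $d'_0$, and the position of the factor of ${\bf x}$. Successors can then be updated in amortized constant time, and subsets hashed in $O(1)$.

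Finally, we minimize the resulting DFAO with $N=O(m^2n^4)$ states. We use Hopcroft's algorithm adapted to DFAOs by van Spaendonck, which runs in $O(N\log N)=O(m^2n^4\log(m^2n^4))$ time. Adding the cost of Step~1 gives the claimed total $O(n^6\log^2 n + m^2n^4\log(m^2n^4))$.
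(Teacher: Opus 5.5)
Your proposal is correct and follows essentially the same route as the paper. Both build $M_{n,c}$ in $O(n^6\log^2 n)$ time, take its product with $M_h$, project away $y$, bound the determinized DFAO by $O(m^2n^4)$ states via Lemma~\ref{lemma:lin-subseq-consec} and the counting in Theorem~\ref{theorem:lin-subseq-sc}, and add the $O(N\log N)$ minimization cost. Your extra remarks on leading zeros and on the naive $O(n)$ per-subset cost go beyond what the paper spells out but do not change the argument; you resolve the latter, as the paper does, by counting transitions as the cost proxy.
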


\begin{proof}
\sloppy
We are given a DFAO $M_h=(Q_h, \Sigma_k, \delta_h, q_{0,h}, \Delta_h, \tau_h)$ for $(h(i))_{i \geq 0}$ with $m$ states and we want to create an automaton for $(h(ni+c))_{i \geq0}$. We show the output of $M_h$ on $y$ by $M_h(y)$ and we have $M_h(y) = h([y])$.
Furthermore, we have $M_{n, c}$ from above such that $M_{n,c}(x, y)$ is $1$ if and only if $n[x] + c = [y]$.

We first use the product construction for $M_{n, c}$ on input $x \times y$ and $M_h$ on input $y$ to get the automaton $M'$. Consider a state $[q, p]$ reachable on $xa\times yb$ in $M'$ where $q$ is from $M_{n, c}$ and $p$ is from $M_h$. We have $q = [a, b, d, d']$ where $d=[yb]-n[xa]$, $d'=[y]-n[x]$ and $p=\delta(q_{0,h}, yb)$. This automaton has $O(mn^2)$ states.

Next, we project the $y$ component away and use the subset construction to obtain a DFAO from the non-deterministic automaton. On input $xa$ the automaton reaches a state consisting of some $[q, p]$ where $q=[a, b, d, d']$ such that there is $yb$ where $d=[yb]-n[xa]$, $d'=[y]-n[x]$,  $d, d' \in I_n$ and $p = \delta(q_{0, h}, yb) \in Q_h$. 
In each state of the DFAO, using an argument similar to the one used in the proof of Theorem \ref{theorem:lin-subseq-sc}, there are $O(m^2n^4)$ possibilities. Therefore, the DFAO has $O(m^2n^4)$ states.

Taking the time required for constructing $M_{n, c}$ and minimization steps into account, creating $M$ given $M_{n, c}$ takes $O( n^6 \log^2 n + m^2n^4\log (m^2n^4))$ time.
\end{proof}

\section{Conclusion}
In a previous paper \cite{moradi_complexity_2026}, we studied topics similar to this paper, but with base-$k$ input instead of input in Fibonacci representation. We can attempt to study these topics in other numeration systems and solve similar problems. For example, we state the following as an open problem.

\begin{problem}
    What is the state complexity of $(r(i+c))_{i \geq 0}$ where $r$ is the Tribonacci word?
\end{problem}

We suspect our results generalize to any generalized automatic sequence, defined over a Pisot numeration system of degree $d$.  In this case recognizing the relation $Y = nX+ c$ can be done using $O(n^d)$ states, and computing the DFAO for a linear subsequence can be done in $O(n^{d+2})$ states; studying these is left as a potential future research topic.

\end{document}